\title{The Pseudo-Reachability Problem for Diagonalisable Linear
	Dynamical Systems} 
\author{Julian D'Costa}{Department of Computer Science, University of Oxford, UK}%
{julianrdcosta@gmail.com}{https://orcid.org/0000-0003-2610-5241}{emmy.network foundation under the aegis of the Fondation de Luxembourg.}
\author{Toghrul Karimov}%
{Max Planck Institute for Software Systems, Saarland Informatics Campus, Germany}%
{toghs@mpi-sws.org}{https://orcid.org/0000-0002-9405-2332}{}
\author{Rupak Majumdar}%
{Max Planck Institute for Software Systems, Kaiserslautern, Germany}%
{rupak@mpi-sws.org}{https://orcid.org/0000-0003-2136-0542}{DFG grant 389792660 as part of TRR 248 (see
	\url{https://perspicuous-computing.science}).}
\author{Jo\"el Ouaknine}%
{Max Planck Institute for Software Systems, Saarland Informatics Campus, Germany}%
{joel@mpi-sws.org}{https://orcid.org/0000-0003-0031-9356}{%
	DFG grant 389792660 as part of TRR 248 (see
	\url{https://perspicuous-computing.science}). 
	Jo\"el Ouaknine is also affiliated with Keble College, Oxford as \href{http://emmy.network/}{\texttt{emmy.network}} Fellow.}
\author{Mahmoud Salamati}%
{Max Planck Institute for Software Systems, Kaiserslautern, Germany}%
{msalamati@mpi-sws.org}{https://orcid.org/0000-0003-3790-3935}{}
\author{James Worrell}%
{Department of Computer Science, University of Oxford, UK}%
{jbw@cs.ox.ac.uk}{https://orcid.org/0000-0001-8151-2443}{}
\authorrunning{J. D'Costa et al.}
\keywords{pseudo-orbits, Orbit problem, Skolem problem, linear dynamical systems, reachability} 
\newtheorem{problem}[theorem]{Problem}
\def\reals{\mathbb{R}}
\def\nats{\mathbb{N}}
\def\rats{\mathbb{Q}}
\def\algebraics{{\overline{\mathbb{Q}}}}
\def\complex{\mathbb{C}}
\def\integers{\mathbb{Z}}
\def\restr{{\upharpoonright}}
\def\theory#1{{\mathfrak{Th}(#1)}}
\def\realexp{{\mathbb{R}_{\exp}}}
\def\set#1{{\{ #1 \}}}
\def\O{{\mathcal{O}}}
\def\PO{{\widetilde{\mathcal{O}}}}
\def\2dvec#1#2{\begin{bmatrix}
		#1\\
		#2
\end{bmatrix}}
\def\zerovec{{\mathbf{0}}}
\newcommand{\torus}{\mathbb{T}}
\newcommand{\aepsilon}{\mathcal{A}_\epsilon}
\newcommand{\B}{\mathcal{B}}
\newcommand{\T}{\mathcal{T}}
\def\theory{{\mathbb{R}_{\mathsf{MW}}}}
\begin{document}
	
	\maketitle
	
	\begin{abstract}
		We study fundamental reachability problems on pseudo-orbits of linear dynamical systems. 
		Pseudo-orbits can be viewed as a model of computation with limited precision and pseudo-reachability can be thought of as a robust version of classical reachability. 
		Using an approach based on $o$-minimality \\of $\reals_{\exp}$ we prove decidability of the discrete-time pseudo-reachability problem with arbitrary semialgebraic targets for diagonalisable linear dynamical systems.
		We also show that our method can be used to reduce the continuous-time pseudo-reachability problem to the (classical) time-bounded reachability problem, which is known to be conditionally decidable.
	\end{abstract}

	\newcommand{\RM}[1]{\textcolor{red}{\textbf{RM:} #1}}
	\newcommand{\MS}[1]{\textcolor{blue}{\textbf{MS:} #1}}
	\newcommand{\JD}[1]{\textcolor{purple}{\textbf{JD:} #1}}
	\newcommand{\TK}[1]{\textcolor{orange}{\textbf{TK:} #1}}
	\newcommand{\SES}[1]{\textcolor{magenta}{\textbf{SS:} #1}}
	
	\section{Introduction}

A \emph{discrete-time linear dynamical system} (LDS) is given by an update matrix $M \in \rats^{d\times d}$ and a starting point $s \in \rats^d$. An LDS describes a system whose state contains $d$ rational numbers and evolves linearly. The \emph{orbit} of such a system is the infinite sequence $\langle s, Ms, M^2s, \ldots \rangle$ of points in $\rats^d$. Orbits of LDS arise in many areas of computer science and mathematics, including verification of linear loops \cite{POPL22}, automata theory \cite{barloy_et_al:LIPIcs:2020:11652}, and the theory of linear recurrence sequences \cite{joelReview}.

A fundamental problem about LDS is the question of deciding, given a system $\langle M, s\rangle$ and a semialgebraic target set $S \subseteq \reals^d$, whether there exists $n$ such that $M^ns \in S$.
This problem is known as the \emph{reachability problem for LDS} and has been studied extensively over the last few decades.
In their seminal work, Kannan and Lipton showed that the point-to-point reachability problem, i.e., the case in which $S$ is a singleton, is decidable in polynomial time. 
At the same time they observed that the case in which $S$ is a $(d-1)$-dimensional subspace of $\reals^d$ (i.e.~a~hyperplane) is equivalent to the famous \emph{Skolem problem} whose decidability remains open to this day. 
The Skolem problem asks, given a linear recurrence sequence defined by a recurrence relation $u_{n+d} = a_1u_n + \ldots + a_du_{n+d-1}$ and initial values $u_0, \ldots, u_{d-1}$, to decide whether there exists~$n$ such that $u_n = 0$.
In addition to this Skolem-hardness, the difficulty of settling the reachability problem was further demonstrated by the results of \cite{Ouaknine2014}, which show that solving the reachability problem with halfspace targets, known as the \emph{positivity problem}, would entail major mathematical breakthroughs in the field of Diophantine approximation.

The reachability problem is defined with reference to the exact dynamics of an LDS\@. Since computational systems typically operate with finite precision, it is natural to consider an alternate notion of reachability involving so-called \emph{pseudo-orbits}.
The notion of pseudo-orbit is an important conceptual tool in dynamical systems that was 
introduced
by Anosov~\cite{Anosov}, Bowen~\cite{Bowen1}, and
Conley~\cite{Conley78}, and was used by the latter to prove what is
sometimes called the fundamental theorem of dynamical systems. Given an LDS $\langle M, s \rangle$, a sequence $\langle x_n \mid n \in \nats \rangle$ is an \emph{$\epsilon$-pseudo-orbit} of $s$ under $M$ if $x_0 = s$ and $\|Mx_n - x_{n+1}\| < \epsilon$ for all $n \in \mathbb{N}$. In other words, in a pseudo-orbit one considers an enlarged transition relation that is obtained by considering the dynamical system up to precision $\epsilon$. Given $\epsilon>0$, a set $S$ is said to be \emph{$\epsilon$-pseudo-reachable} if there exists an $\epsilon$-pseudo-orbit $\langle x_0 = s, x_1, x_2, \ldots \rangle$ of $s$ under $M$ that reaches $S$. We further say that $S$ is \emph{pseudo-reachable} if $S$ is $\epsilon$-pseudo-reachable for every $\epsilon>0$. If a set $S$ of error states is not pseudo-reachable then we can consider the system as being safe if implemented with sufficient precision, while if $S$ is pseudo-reachable, it means that no finite amount of precision suffices to make the system reliably safe. 

Recently, D'Costa et al.~\cite{dcosta_et_al:LIPIcs.MFCS.2021.34} considered the pseudo-reachability problem and, somewhat surprisingly, showed decidability in cases where $S$ is a point (the pseudo-orbit problem), a hyperplane (the pseudo-Skolem problem) or a halfspace (the pseudo-positivity problem). Their proof of the first result relies on an exact characterisation of $\epsilon$-pseudo-orbits. Their solution to the latter two problems, however, depends heavily on the fact that a hyperplane (a halfspace) can be defined using a single equality (inequality), an approach which unfortunately cannot be generalised to arbitrary semialgebraic targets.
\textbf{In this work, we develop a novel logical approach to show the decidability of the pseudo-reachability problem for diagonalisable systems with arbitrary semialgebraic targets.} 

\subsection{High-level proof sketch of our approach}

Our solution to the diagonalisable pseudo-reachability problem can be summarised as follows.
Let $\PO_{\epsilon}(n)$ denote the set of all points that are reachable exactly at time $n$ via an $\epsilon$-pseudo-orbit. 
The pseudo-reachability problem then consists in checking whether the sentence $\Phi \coloneqq \forall \epsilon. \, \exists n\in \nats: \PO_{\epsilon}(n) \cap S \neq \emptyset$ is true.
In this form, $\Phi$ is not amenable to application of logical methods as it involves both integer and real-valued variables, in addition to exponentiation with a complex base (coming from non-real eigenvalues of $M$). 
We therefore first move to the continuous domain and construct an abstraction $\aepsilon(t)$ for $t \in \reals^{\geq 0}$ that is definable in $\realexp$ such that $\aepsilon(n) \supseteq \PO_{\epsilon}(n)$ for all $n \in \nats$.
We then investigate the values of $\epsilon$ and $t$ that make $\Psi(\epsilon, t) \coloneqq \aepsilon(t)\cap S \neq \emptyset$ true.
We show that by the $o$-minimality of $\realexp$, either for every $\epsilon>0$ there exists $T$ such that for all $t > T$, $\Psi(\epsilon, t)$ holds, or the pseudo-reachability problem is equivalent to a finite-horizon reachability problem that is easily solvable.
In the former case, it follows that for every $\epsilon > 0$, $\Psi(\epsilon, n)$ holds for all sufficiently large integer values $n$, thus establishing a bridge back to the discrete setting.
We conclude by showing that in this case, $S$ is pseudo-reachable.
Intuitively, the idea is to use the universal quantification over $\epsilon$ to argue that if $S$ can be reached using an $\epsilon/2$-abstraction at all but finitely many time steps, then it can be reached by an $\epsilon$-pseudo-orbit, in fact at infinitely many possible time steps. 
The importance of the universal quantification is also illustrated by the following hardness result.
For any fixed $\epsilon>0$, it is decidable whether $\exists n \in \nats: \, \aepsilon(n) \cap S \neq \emptyset$ holds, whereas the $\epsilon$-pseudo-reachability problem of determining whether $\exists n \in \nats: \, \PO_{\epsilon}(n) \cap S \neq \emptyset$ holds is hard with respect to (a hard subclass of) the Skolem problem, as shown in \autoref{hardness}. 

The approach outlined above can be adapted to solve a few other related problems about linear dynamical systems.
An example would be the \emph{robust reachability problem} recently considered by Akshay et al. in \cite{akshay_et_al:LIPIcs.STACS.2022.5}: given an LDS $\langle M, s\rangle$ and a semialgebraic target $S$, decide whether for all $\epsilon>0$ there exists a point $s'$ in the $\epsilon$-neighbourhood of $s$ whose orbit reaches $S$.
This problem can be thought of as a modification of the pseudo-reachability problem where only one perturbation is allowed at the very beginning. 
Due to this simplification, we are able to show, in \autoref{sec::discrete_robust_reach}, full decidability (that is, without the restriction to diagonalisable systems) of the robust reachability problem.
Finally, because the first step of our solution is to translate the problem into the continuous domain, the continuous versions of both the pseudo-reachability problem (discussed in \autoref{sec:cont_pseudo_reach}) and the robust reachability problem (discussed in \autoref{sec::cont_robust_reach}) can be handled using the same approach, arguably more naturally.
For the former, because we proceed by reducing the pseudo-reachability problem to bounded-time reachability problem, the decidability result assumes Schanuel's conjecture.

	
	\section{Mathematical tools}

We write $B(c, r)$ for the closed $\ell_2$-ball of radius $r$ centred around $c \in \reals^d$ and $\zerovec \in \reals^d$ for the $d$-dimensional zero vector. We denote by $\torus \subseteq \complex$ the unit circle in the complex plane and by~$||x||$ the $\ell_2$-norm of a vector $x\in \reals^d$.

\subsection{First-order logic}

We denote by $\reals_0$ the (structure of) real numbers with addition and multiplication, by $\realexp$ the real numbers with addition, multiplication and (unbounded) exponentiation and by $\reals_{\exp, \cos \restr [0, T]}$ the real numbers with exponentiation and bounded (in input, by some~$T > 0$) trigonometric functions. By the Tarski-Seidenberg theorem, the theory of $\reals_0$ admits effective quantifier elimination and is therefore decidable. The theories of $\realexp$ and $\reals_{\exp, \cos \restr [0, T]}$ are known to be decidable subject to Schanuel's conjecture (see, e.g., \cite{Lang1966}) in transcendental number theory \cite{Macintyre1996, Wilkie1997}. However $\reals_{\exp, \cos \restr [0, T]}$ (and hence $\realexp$ and $\reals_0$) are unconditionally known to be $o$-minimal \cite{vandenDries}. That is, any subset of $\reals$ definable using arithmetic operations, real exponentiation and bounded trigonometric functions is a finite union of intervals. In~particular, any subset of $\reals^{\geq0}$ definable in this way is either bounded or contains all sufficiently large real numbers. For the discrete-time problems considered in this paper we will only need to work with $\realexp$. We will need $\reals_{\exp, \cos \restr [0, T]}$ only when considering the classical bounded-time reachability problem for continuous-time linear dynamical systems.

A \emph{semialgebraic} set is a subset of $\reals^d$ definable (without parameters) in $\reals_0$. We say that a function $\varphi: \reals^{l}\to\reals^{m}$ is semialgebraic if its graph is a semialgebraic subset of $\reals^{l+m}$. Intuitively, semialgebraic functions are exactly the functions that can be specified using arithmetic and logical operations over the real numbers.

\subsection{Kronecker's theorem and its applications}
\label{sec::Kronecker}

The analysis of problems about linear dynamical systems often reduces to that of the orbit $\langle \Gamma^n \mid n \in \nats\rangle$ where $\Gamma^n = (\gamma_1^n, \ldots, \gamma_k^n)$ for $\gamma_1, \ldots, \gamma_k \in \torus$.
Let $\T = \operatorname{cl}(\{\Gamma^n : n \in \nats\})$ be the topological closure of this discrete orbit.
The set $\T$ is semialgebraic and well-understood with the help of Kronecker's theorem in simultaneous Diophantine approximation \cite{Hardy:1999}.
\begin{theorem}[Kronecker]
	\label{thm::Kronecker}
	Let $\theta_1, \ldots, \theta_k, \varphi_1, \ldots, \varphi_k \in \reals$ be such that for any  $a_1, \ldots, a_k \in \integers$,
	\[
	\sum_{i=1}^k a_i \theta_i \in \integers \Rightarrow \sum_{i=1}^k a_i \varphi_i \in \integers.
	\]
	For any $\epsilon>0$ there exist infinitely many $n \in \nats$ such that $\{n\theta_i-\varphi_i\} < \epsilon$ for all $1\leq i \leq k$, where $\{x\}$ denotes the distance from $x \in \reals$ to the nearest integer.
\end{theorem}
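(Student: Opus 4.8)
The plan is to recast the statement in the torus and reduce it to a membership assertion about an orbit closure. Identify $\torus$ with $\reals/\integers$, write $\torus^k=\reals^k/\integers^k$ with the coordinatewise metric $y\mapsto(\{y_1\},\dots,\{y_k\})$ (which satisfies $\{a+b\}\le\{a\}+\{b\}$ in each coordinate), put $\theta=(\theta_1,\dots,\theta_k)$ and $\varphi=(\varphi_1,\dots,\varphi_k)$, let $\pi:\reals^k\to\torus^k$ be the quotient map, and set $G=\operatorname{cl}(\{\pi(n\theta):n\in\nats\})\subseteq\torus^k$. The key is to prove $\pi(\varphi)\in G$; the quantitative conclusion then follows from the elementary pigeonhole fact, which uses nothing about $\varphi$: for every $\epsilon>0$ the set $S_\epsilon:=\{m\in\nats:\{m\theta_i\}<\epsilon \text{ for all }i\}$ is infinite (partition $\torus^k$ into $\lceil 2/\epsilon\rceil^k$ cells of coordinatewise diameter $<\epsilon$; among any $JN+1$ of the points $\pi(0),\pi(\theta),\pi(2\theta),\dots$ some cell is hit $J+1$ times, yielding $J$ elements of $S_\epsilon$). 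Granting $\pi(\varphi)\in G$: given $\epsilon>0$, choose $n_0\in\nats$ with $\{n_0\theta_i-\varphi_i\}<\epsilon/2$ for all $i$, and choose arbitrarily large $m\in S_{\epsilon/2}$; then $\{(n_0+m)\theta_i-\varphi_i\}\le\{n_0\theta_i-\varphi_i\}+\{m\theta_i\}<\epsilon$ for all $i$, and $n_0+m$ is unbounded as $m$ ranges over the infinite set $S_{\epsilon/2}$, so infinitely many $n$ work.

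Next I would verify that $G$ is a \emph{closed subgroup} of $\torus^k$. It is closed by construction and closed under translation by $\pi(\theta)$, hence (passing to limits) under translation by every element of $G$; applying the pigeonhole fact with $\epsilon\to0$ gives $0\in G$, and since $\pi(n\theta)\in G$ for all $n$ we also get $nx$ mapping into $G$ for $x\in G$ and $n\ge 1$. For inverses: given $x\in G$, the same fact with $x$ in place of $\theta$ produces, for each $\delta>0$, some $m\ge2$ with $\{mx_i\}<\delta$ for all $i$; since $(m-1)x$ maps into $G$ and lies within $\delta$ of $-x$ coordinatewise, letting $\delta\to0$ and using closedness of $G$ gives $-x\in G$. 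Hence $G$ is a subgroup.

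The heart of the argument is $\pi(\varphi)\in G$, which I would prove by contradiction using Weyl-type exponential sums. Suppose $\pi(\varphi)\notin G$. As $G$ is a subgroup, the coset $G-\pi(\varphi)$ is disjoint from $G$, and both are compact; by the Stone--Weierstrass theorem there is a real trigonometric polynomial $P(x)=\sum_{a\in A}c_a\,e^{2\pi i\,a\cdot x}$ (finite $A\subseteq\integers^k$, $c_{-a}=\overline{c_a}$) with $P<-\tfrac12$ on $G$ and $P>\tfrac12$ on $G-\pi(\varphi)$. For each $a$, a geometric sum gives $\tfrac1N\sum_{n=1}^N e^{2\pi i\,a\cdot n\theta}\to1$ if $a\cdot\theta\in\integers$ and $\to0$ otherwise; writing $A_0=\{a\in A:a\cdot\theta\in\integers\}$, this yields
\[
\lim_{N\to\infty}\frac1N\sum_{n=1}^N P(n\theta-\varphi)\;=\;\sum_{a\in A_0}c_a\,e^{-2\pi i\,a\cdot\varphi}\;=\;\sum_{a\in A_0}c_a\;=\;\lim_{N\to\infty}\frac1N\sum_{n=1}^N P(n\theta),
\]
where the middle equality is \emph{exactly} where the hypothesis enters: $a\cdot\theta\in\integers$ forces $a\cdot\varphi\in\integers$, hence $e^{-2\pi i\,a\cdot\varphi}=1$. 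But $\pi(n\theta-\varphi)\in G-\pi(\varphi)$ for every $n$, so the leftmost average exceeds $\tfrac12$, while $\pi(n\theta)\in G$ for every $n$, so the rightmost average is below $-\tfrac12$ --- a contradiction. Hence $\pi(\varphi)\in G$, completing the argument.

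I expect the Stone--Weierstrass separation (and, beneath it, the verification that $G$ is a group, which is where the pigeonhole input is really needed) to be the only delicate point; the rest is bookkeeping. An alternative to the exponential-sum computation is to invoke the classification of closed subgroups of $\torus^k$ (equivalently Pontryagin duality): each such subgroup is the common zero set of the characters trivial on it, so $G=\{\pi(x):a\cdot x\in\integers\text{ whenever }a\cdot\theta\in\integers\}$, and the hypothesis places $\pi(\varphi)$ in this set at once. Either way, the conceptual content is the duality between integer linear relations among the $\theta_i$ and characters of the torus: the hypothesis says $\varphi$ respects every relation $\theta$ does, and the theorem asserts this suffices to approximate $\varphi$ along the orbit of $\theta$ infinitely often.
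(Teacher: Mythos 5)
Your proof is correct. Note that the paper does not prove this statement at all: it is quoted as a classical theorem of Kronecker with a citation to Hardy and Wright, so there is no in-paper argument to compare against. What you have written is a complete, self-contained proof along the standard Weyl--Bohr lines: reduce to showing $\pi(\varphi)$ lies in the orbit closure $G=\operatorname{cl}(\{\pi(n\theta)\})\subseteq\torus^k$, verify via pigeonhole that $G$ is a closed subgroup (which also upgrades ``there exists $n$'' to ``there exist infinitely many $n$''), and establish membership by separating $G$ from the translated closure with a trigonometric polynomial and averaging, the hypothesis $a\cdot\theta\in\integers\Rightarrow a\cdot\varphi\in\integers$ entering exactly where the surviving Fourier modes are evaluated at $\varphi$. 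All the small steps check out: the subadditivity of $\{\cdot\}$, the disjointness of $G$ and $G-\pi(\varphi)$ when $\pi(\varphi)\notin G$ (which needs the group property you established), and the convergence of the exponential averages. Your closing remark is also apt: the Pontryagin-duality description of $G$ as the common zero set of the characters vanishing on the orbit is precisely the fact the paper exploits later, in Section 2.2, when it describes $\T$ by multiplicative relations among the $\gamma_i$; your proof makes that connection explicit rather than treating the theorem as a black box.
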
 
To apply this theorem to our situation, let 
\[
\T = \{(z_1, \ldots, z_k): \forall a_1,\ldots,a_k \in \integers: \, \gamma_1^{a_1}\cdots\gamma_k^{a_k} = 1 \Rightarrow  z_1^{a_1}\cdots z_k^{a_k} = 1 \}.
\]
For $z =(z_1, \ldots, z_k) \in \T$, by considering $\theta_i = \frac{\operatorname{arg}(\gamma_i)}{2\pi}$ and $\varphi_i = \frac{\operatorname{arg}(z_i)}{2\pi}$ for $1 \leq  i \leq k$ we can deduce that for each $\epsilon > 0$ there exists $n$ such that $||z - \Gamma^n||<\epsilon$ and hence the orbit $\langle \Gamma^n \mid n \in \nats\rangle$ is dense in $\T$.
On the other hand, using Masser's deep results \cite{Mas88} about multiplicative relations between algebraic numbers one can compute, in polynomial time, a finite basis for $\{(a_1, \ldots, a_k) \in \integers^k : \gamma_1^{a_1}\cdots\gamma_k^{a_k}=1\}$.
Hence $\T$ is closed, semialgebraic and effectively computable. 
It then follows that $\T = \operatorname{cl}(\{\Gamma^n : n \in \nats\})$.

We will also need the following lemma which is a consequence of the effective computability of $\T = \operatorname{cl}(\{\Gamma^n : n \in \nats\})$ as a semialgebraic set.
\begin{lemma}
	Let $R = \operatorname{diag}(\Lambda_1,\ldots, \Lambda_k) \in \reals^{2k\times 2k}$ be a block diagonal matrix where $\Lambda_i$ is an algebraic rotation matrix for $1 \leq i \leq k$. The closure of the set $\{R^n x : n \in \nats\}$, for $x$ with algebraic entries, is semialgebraic and effectively computable.
\end{lemma}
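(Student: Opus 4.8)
The plan is to reduce the claim to the effective computability of the set $\T = \operatorname{cl}(\{\Gamma^n : n \in \nats\})$ established just above. First I would identify $\reals^{2k}$ with $\complex^k$, sending the $i$-th coordinate block $(a,b) \in \reals^2$ to the complex number with real part $a$ and imaginary part $b$. Under this identification a $2 \times 2$ rotation matrix $\Lambda_i$ (say by angle $\theta_i$) acts on its block as multiplication by $\gamma_i := \cos\theta_i + \mathbf{i}\sin\theta_i$, and since $\Lambda_i$ is an \emph{algebraic} rotation matrix, $\gamma_i$ is an algebraic number on $\torus$. Writing the image of $x$ as $(z_1, \ldots, z_k)$ with each $z_i \in \algebraics$, the orbit $\{R^n x : n \in \nats\}$ then corresponds to $\{(\gamma_1^n z_1, \ldots, \gamma_k^n z_k) : n \in \nats\} = L(\{\Gamma^n : n \in \nats\})$, where $\Gamma^n = (\gamma_1^n, \ldots, \gamma_k^n)$ and $L \colon \complex^k \to \complex^k$ is the coordinatewise-multiplication map $(y_1, \ldots, y_k) \mapsto (z_1 y_1, \ldots, z_k y_k)$.

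Next I would note that, viewed through the identification $\reals^{2k} \cong \complex^k$, the map $L$ is $\reals$-linear with a matrix of algebraic entries, hence continuous and semialgebraic. Since the orbit $\{\Gamma^n : n \in \nats\}$ lies in the compact set $\torus^k$, continuity of $L$ yields
\[
\operatorname{cl}\bigl(L(\{\Gamma^n : n \in \nats\})\bigr) = L\bigl(\operatorname{cl}(\{\Gamma^n : n \in \nats\})\bigr) = L(\T):
\]
the reverse inclusion is immediate from continuity, while for $\subseteq$ any point of the left-hand side is a limit of some sequence $L(\Gamma^{n_j})$, which has a subsequence along which $\Gamma^{n_j} \to u$ for some $u \in \T$ (compactness of $\torus^k$), whence that limit equals $L(u)$. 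Thus $\operatorname{cl}(\{R^n x : n \in \nats\})$ is exactly the image of $\T$ under the fixed algebraic linear map $L$.

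Finally I would invoke the fact, from the discussion above, that $\T$ is effectively computable as a (parameter-free) semialgebraic set — using Masser's theorem to obtain a basis for the lattice of multiplicative relations among $\gamma_1, \ldots, \gamma_k$. Since $L$ is an explicit linear map with algebraic coefficients, and algebraic constants are definable without parameters in $\reals_0$, the image $L(\T)$ is again semialgebraic, and a defining formula can be produced by effective quantifier elimination (Tarski–Seidenberg); transporting back to $\reals^{2k}$ completes the argument. I do not expect a genuine obstacle here: the substantive work is the computation of $\T$, which is already in hand, and the only points that require a little care are that closure commutes with $L$ even when some $z_i = 0$ makes $L$ non-injective (handled by the compactness of $\torus^k$ rather than by injectivity) and the routine bookkeeping needed to keep every constant algebraic, so that the resulting set is parameter-free semialgebraic and effectively describable.
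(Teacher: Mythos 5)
Your proposal is correct and follows essentially the same route as the paper: the paper's one-line proof ("diagonalise $R^n$ and observe that all eigenvalues of $R$ are algebraic numbers in $\torus$") amounts precisely to your identification of $\reals^{2k}$ with $\complex^k$, under which the orbit becomes the image of $\{\Gamma^n : n\in\nats\}$ under a fixed algebraic linear map, so that everything reduces to the effective computability of $\T$ already established via Kronecker and Masser. Your additional care that closure commutes with the map $L$ (via compactness of $\torus^k$) and that all constants remain algebraic is exactly the bookkeeping the paper leaves implicit.
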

The proof follows immediately from diagonalising $R^n$ and observing that all eigenvalues of~$R$ are algebraic numbers in $\torus$.
	
	
\section{Decidability for discrete-time diagonalisable systems}	
\label{sec::disc-decidability}
In this section we prove our main result: the decidability of the pseudo-reachability problem for discrete-time diagonalisable \emph{affine dynamical systems}, which are a generalisation of~LDS\@.\\ 
The reason we consider affine systems is that the well-known homogenisation trick (increasing the dimension by one and adding a coordinate that is always equal to $1$) used for reducing the classical reachability problem for affine systems to the reachability problem for LDS doesn't work for the pseudo-reachability problem: when perturbations are allowed, one cannot force a coordinate to remain constant.
Hence affine systems require separate treatment. 

\begin{problem}[pseudo-reachability]
	Let $M \in \rats^{L \times L}$ be an update matrix, $s \in \rats^L$ be a starting point, $b \in \rats^L$ be an affine term and $S \subseteq \reals^L$ be a semialgebraic target set. A sequence $\langle x_0 = s, x_1, x_2 \ldots \rangle$ is an \emph{$\epsilon$}-pseudo-orbit of $s$ if $||Mx_n + b - x_{n+1}|| \leq \epsilon$ for all $n$. 
	The pseudo-reachability problem asks: given $M, b, s$ and $S$, decide whether for each $\epsilon>0$ there exists an $\epsilon$-pseudo-orbit of $s$ 
	that reaches the set $S$. 
\end{problem}
Let $\PO_{\epsilon}(n)$ denote the set of all points that are reachable via an $\epsilon$-pseudo-orbit of $s$ under the map $x \mapsto Mx +b$ at time $n$. 
Since $\PO_{\epsilon}(0) = \{s\}$ and $\PO_{\epsilon}(n+1) = M\PO_{\epsilon}(n) + b + \epsilon B(\zerovec, 1)$, by~induction we can show that $\PO_{\epsilon}(n) =  M^ns + \sum_{i=0}^{n-1}M^ib + \epsilon \sum_{i=0}^{n-1}M^iB(\zerovec, 1)$. 
The pseudo-reachability problem is then equivalent to determining the truth of $\forall \epsilon. \, \exists n : \PO_{\epsilon}(n) \cap S \neq \emptyset$.
Here $B(\zerovec, 1)$ can be viewed as a set of ``control inputs'', and the pseudo-reachability problem can be viewed as the problem of determining whether $S$ can be reached using arbitrarily small control inputs.
The next lemma shows that we can in fact, choose any reasonable  control set.

\begin{lemma}[Invariance under change of the control set]
	\label{convenient-control-set}
	Let $\B \subseteq \reals^L$ be a bounded set containing an open ball around the origin.
	\begin{enumerate}
		\item The pseudo-reachability problem as defined above is equivalent to the problem of determining whether
		\[
		\forall \epsilon. \, \exists n :( M^ns + \sum_{i=0}^{n-1}M^ib + \epsilon \sum_{i=0}^{n-1}M^i\B) \cap S \neq \emptyset.
		\]
		\item  We may assume the matrix $M$ is in real Jordan form.
	\end{enumerate}
\end{lemma}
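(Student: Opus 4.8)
The plan is to prove each part by a change-of-variables argument, reducing the stated problem to the original pseudo-reachability problem (and vice versa) via a bi-Lipschitz map.

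For part (1), fix $\B$ bounded and containing $B(\zerovec,\rho)$ for some $\rho>0$; since $\B$ is bounded, also $\B \subseteq B(\zerovec, R)$ for some $R>0$. The idea is to compare the two families of reachable sets
\[
\PO_\epsilon(n) = M^ns + \textstyle\sum_{i=0}^{n-1}M^ib + \epsilon\sum_{i=0}^{n-1}M^iB(\zerovec,1)
\quad\text{and}\quad
\PO^\B_\epsilon(n) = M^ns + \textstyle\sum_{i=0}^{n-1}M^ib + \epsilon\sum_{i=0}^{n-1}M^i\B.
\]
From $B(\zerovec,\rho)\subseteq\B\subseteq B(\zerovec,R)$ and the fact that $X\subseteq Y$ implies $\sum_{i=0}^{n-1}M^iX \subseteq \sum_{i=0}^{n-1}M^iY$ (Minkowski sums respect inclusion), we get, for every $n$ and every $\epsilon>0$,
\[
\PO_{\rho\epsilon}(n) \;\subseteq\; \PO^\B_{\epsilon}(n) \;\subseteq\; \PO_{R\epsilon}(n).
\]
Consequently, if for every $\delta>0$ there is $n$ with $\PO_\delta(n)\cap S\neq\emptyset$, then for every $\epsilon>0$ we may take $\delta = \rho\epsilon$ to find $n$ with $\PO^\B_\epsilon(n)\cap S\neq\emptyset$; conversely, if for every $\epsilon>0$ there is $n$ with $\PO^\B_\epsilon(n)\cap S\neq\emptyset$, then taking $\epsilon=\delta/R$ gives $\PO_\delta(n)\cap S\neq\emptyset$ for every $\delta>0$. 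Since both conditions quantify $\epsilon$ (resp. $\delta$) universally, the two "$\forall\epsilon\,\exists n$" statements are equivalent, which is exactly the claim. (Note this argument does not even need $\B$ open or the ball to be open — containing $B(\zerovec,\rho)$ suffices; openness is harmless.)

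For part (2), let $M = P J P^{-1}$ with $J$ the real Jordan form of $M$ and $P \in \reals^{L\times L}$ invertible. The map $x \mapsto y = P^{-1}x$ is a linear bijection, hence bi-Lipschitz: there are constants $0 < c \le C$ with $c\|v\| \le \|P^{-1}v\| \le C\|v\|$ for all $v$. Under this change of coordinates the affine map $x\mapsto Mx+b$ becomes $y \mapsto Jy + P^{-1}b$, the start point becomes $P^{-1}s$, and the target becomes $P^{-1}S$, which is again semialgebraic since $P$ has rational (indeed algebraic) entries and semialgebraic sets are closed under affine maps with such coefficients. An $\epsilon$-pseudo-orbit $\langle x_n\rangle$ of the original system maps to a sequence $\langle y_n = P^{-1}x_n\rangle$ satisfying $\|Jy_n + P^{-1}b - y_{n+1}\| = \|P^{-1}(Mx_n+b-x_{n+1})\| \le C\epsilon$, i.e. a $(C\epsilon)$-pseudo-orbit of the transformed system, and conversely a $\delta$-pseudo-orbit of the transformed system pulls back to a $(\delta/c)$-pseudo-orbit of the original. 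As in part (1), because the defining condition quantifies universally over the error, reaching $S$ for all $\epsilon$ in the original system is equivalent to reaching $P^{-1}S$ for all $\delta$ in the Jordan-form system. (One could alternatively deduce part (2) directly from part (1) applied with the ellipsoidal control set $\B = P B(\zerovec,1)$, but the coordinate-change phrasing is cleaner.)

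The only subtlety — and it is minor — is bookkeeping the direction of the quantifiers: each inclusion or Lipschitz bound rescales $\epsilon$ by a fixed constant, so one must check that the universally quantified statement is insensitive to such rescaling, which it is precisely because the quantifier is universal and the constant is fixed and positive. There is no real obstacle here; the lemma is a routine normalisation step whose purpose is to let later sections assume $M = J$ and work with whatever control set (e.g. a box or an ellipsoid aligned with the Jordan blocks) is most convenient.
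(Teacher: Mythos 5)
Your proposal is correct and follows essentially the same route as the paper: part (1) by sandwiching $\B$ between two balls and absorbing the resulting constants into the universally quantified $\epsilon$, and part (2) via the invertible change of basis to real Jordan form (your parenthetical remark about applying part (1) to the control set $P B(\zerovec,1)$ is precisely the paper's one-line argument for (2)).
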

\begin{proof}
	Since $\B$ is assumed to be bounded and to contain an open neighbourhood around the origin, there must exist constants $C_1, C_2$ such that $C_1B(\zerovec, 1) \subseteq \B \subseteq C_2B(\zerovec, 1)$.
	Hence
	\[
	C_1\epsilon \sum_{i=0}^{n-1}M^iB(\zerovec, 1) \subseteq \epsilon \sum_{i=0}^{n-1}M^i\B \subseteq C_2\epsilon \sum_{i=0}^{n-1}M^iB(\zerovec, 1).
	\]
	The proof of (1) then follows from the fact that $\epsilon$ is universally quantified: one can simulate (i) an $\epsilon$-pseudo-orbit with control set $\B$ using a $C_2\epsilon$-pseudo-orbit with control set $B(\zerovec, 1)$ and (ii) an $\epsilon$-pseudo-orbit with control set $B(\zerovec, 1)$ using a $\epsilon/C_1$-pseudo-orbit with control set $\B$. 
	Proof of (2) follows from observing that multiplying $B(\zerovec, 1)$ by an invertible change of basis matrix results in a bounded control set containing a neighbourhood around $\zerovec$.
\end{proof}
Observe that the change of the control set described above is not applicable when $\epsilon$ is fixed, as in the $\epsilon$-pseudo-reachability problem discussed in \autoref{hardness}.

\subsection{A closed form for $\PO_{\epsilon}(n)$}
\label{sec::closed-form}
We now use \autoref{convenient-control-set} to choose a control set that results in  $\PO_{\epsilon}(n)$ with a convenient first-order closed form: observe that the na\"ive formulation above involves the term $\sum_{i=0}^{n-1}M^iB(\zerovec, 1)$ which is not ``first-order''.

Assume $M$ is diagonalisable and in real Jordan form: $M = \operatorname{diag}(\Lambda_1, \ldots, \Lambda_k, \rho_{k+1}, \ldots, \rho_d)$.
That is, $M$ consists of $d$ block, the first $k$ of which have dimension $2\times2$ and a pair of non-real conjugate eigenvalues, whereas the remaining blocks are $1 \times 1$ and real.
Write $\rho_j$ for the spectral radius of the $j$th block.
We can factor $M$ into a ``scaling'' and a ``rotation'' as $M = D R$ where $D = \operatorname{diag}(\rho_1, \rho_1, \ldots, \rho_k, \rho_k, \rho_{k+1}, \rho_{k+2}, \ldots, \rho_{d})$ is diagonal and $R$ is a block-diagonal matrix that consists of blocks that are either $2\times2$ rotation matrices or $1\times 1$ and equal to $\begin{bmatrix}
	\pm 1
\end{bmatrix}$.
Hereafter we will be using the convenient ``rotation-invariant'' control set
\[
\B = \prod_{j=1}^k B((0,0),1) \times \prod_{j=k+1}^{d}[-1,1] = \prod_{j=1}^d B(\zerovec, 1)
\]where $B((0,0),1)$ is the unit disc.
Observe that $\B$ is a product of $\ell_2$-balls that matches the block structure of $M$.
It follows that $R\B = \B$ and hence
\[
\PO_\epsilon(n)= D^n R^n s + \sum_{i=0}^{n-1}M^ib + \epsilon\sum_{i=0}^{n-1} D^iR^i\B = D^n R^n s + \sum_{i=0}^{n-1}M^ib + \epsilon \B(n)
\]
where $\B(n) = \sum_{i=0}^{n-1} D^i\B$. We then have
\[
\B(n)  = \sum_{i=0}^{n-1} D^i  \prod_{j=1}^d B(\zerovec, 1) = \sum_{i=0}^{n-1} \prod_{j=1}^d B(\zerovec, \rho_j^i) = \prod_{j=0}^d B(\zerovec, \sum_{i=0}^{n-1} \rho_j^i).
\]
Geometrically, the idea is that a $2\times2$ or a $1\times1$ block of $D$ maps an origin-centred disc (which corresponds to a symmetric interval in 1D) to an origin-centred disc, and a set-sum of such discs is again an origin-centred disc.  
Note that our ability to reason in this way crucially depends on the fact that $M$ is diagonalisable.
Finally, since $\sum_{i=0}^{n-1} \rho_j^i$ is either $\frac{\rho_j^n-1}{\rho_j-1}$ or $n \rho_j$, we can write
$\B(n) = \{z : \varphi(z,n, \rho_1^n, \ldots, \rho_d^n)\}$, where $\varphi$ is a semialgebraic predicate.

We can apply the blockwise summation technique, distinguishing between the cases where the spectral radius of the block is 1 or different from 1, to the term $\sum_{i=0}^{n-1}M^ib$ to obtain the closed form $\sum_{i=0}^{n-1}M^ib  = D^nR^nx' + cn + d$, where $x'$, $c$ and $d$ only depend on $M$ and $b$. 
We then fold $s$ and $x'$ into a new, fictive starting point $x$ to obtain the final closed form 
\[
\PO_\epsilon(n)= D^n R^n x + cn +d + \epsilon \B(n).
\]
In order to solve the pseudo-reachability problem, we henceforth consider the problem of determining the truth of the sentence $\forall \epsilon > 0. \, \exists n: (D^n R^n x + cn +d + \epsilon \B(n)) \cap S \neq 0$, where all the input vectors and matrices have real algebraic entries.

%

\subsection{Passing to the abstraction}
The expression for $\PO_{\epsilon}(n)$ contains the term $D^nR^nx$, which is the last obstacle to obtaining an expression which we can attack using known results about theories of real numbers. 
To address this issue we resort to abstracting $\PO_{\epsilon}(n)$. 
Let
\[
\mathcal{T} := \operatorname{cl} \left( \{R^nx : n \in \nats\} \right)
\textrm{ and }
\aepsilon(n) := D^n\mathcal{T} + cn +d + \epsilon \B(n)
\]
where $\mathcal{T}$ is the closure of the orbit of $x$ under $R$, and is semialgebraic and effectively computable by the discussion in \autoref{sec::Kronecker}.
Moreover, recall that by Kronecker's theorem for every $z \in \mathcal{T}$ and $\epsilon>0$ there exist infinitely many integers $0<n_1 <n_2< \ldots$ such that $||R^{n_i}x-z||<\epsilon$ for all $i$.

Here $\aepsilon(n)$ acts as an abstraction of $\PO_\epsilon(n)$. In particular, for all $\epsilon>0$ and $n\in \nats$ we have $\aepsilon(n) \supseteq \PO_\epsilon(n)$. 
Observe that $\aepsilon(n) = \{z : \varphi(z, \epsilon, n, \rho_1^n, \ldots, \rho_d^n)\}$ for a semialgebraic predicate $\varphi$.
Viewing $\aepsilon(n)$ as a proxy for $\PO_\epsilon(n)$, we arrive at the following dichotomy.
\begin{lemma}
	\label{lem::discrete-dichotomy}
	Either
	\begin{enumerate}
		\item for every $\epsilon > 0$ there exists $N_\epsilon$ such that for all $n > N_\epsilon$, $\aepsilon(n)$ intersects $S$, or
		\item there exist $N$ and $\epsilon>0$, both effectively computable, such that $\aepsilon(n)$ does not intersect $S$ for all $n > N$. 
	\end{enumerate}
	Moreover, it can be effectively determined which case holds.
\end{lemma}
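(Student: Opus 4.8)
The plan is to express the condition ``$\aepsilon(n)$ intersects $S$'' as a first-order formula over $\realexp$ in the two real variables $\epsilon$ and $n$ (now ranging over $\reals^{\geq 0}$), then invoke $o$-minimality. First I would note that, by the closed form derived in \autoref{sec::closed-form}, there is a semialgebraic predicate $\varphi$ with $\aepsilon(n) = \{z : \varphi(z, \epsilon, n, \rho_1^n, \ldots, \rho_d^n)\}$, so the statement $\aepsilon(n) \cap S \neq \emptyset$ becomes $\exists z : \varphi(z, \epsilon, n, \rho_1^n, \ldots, \rho_d^n) \wedge z \in S$. Replacing each $\rho_j^n$ by $\exp(n \ln \rho_j)$ (the $\ln \rho_j$ are fixed real algebraic logarithms, hence constants available in $\realexp$), this is a formula $\Theta(\epsilon, n)$ in the language of $\realexp$ with the single block of real quantifiers $\exists z$. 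Thus $\Theta$ defines a subset $X \subseteq \reals^{>0} \times \reals^{\geq 0}$.

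Next I would fix $\epsilon > 0$ and consider the fibre $X_\epsilon = \{ n \geq 0 : \Theta(\epsilon, n)\}$. This is a definable subset of $\reals^{\geq 0}$ in $\realexp$, hence by $o$-minimality it is a finite union of points and intervals; in particular it is either bounded or contains a ray $(N_\epsilon, \infty)$. Define $G = \{\epsilon > 0 : X_\epsilon \text{ is unbounded}\} = \{\epsilon > 0 : \exists N\, \forall n > N : \Theta(\epsilon, n)\}$, which, being again definable in $\realexp$, is itself a finite union of intervals. The key case split is whether $G$ contains an interval of the form $(0, \delta)$ for some $\delta > 0$. If it does, then since membership in $G$ is obviously monotone upward in $\epsilon$ (a larger error set $\aepsilon$ makes intersection easier — formally, $\epsilon_1 \le \epsilon_2$ implies $\mathcal{A}_{\epsilon_1}(n) \subseteq \mathcal{A}_{\epsilon_2}(n)$, so $X_{\epsilon_1} \subseteq X_{\epsilon_2}$), we get $G = \reals^{>0}$, which is exactly case (1). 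If $G$ does not contain any such $(0,\delta)$, then because $G$ is a finite union of intervals there is an effectively computable $\epsilon_0 > 0$ with $(0, \epsilon_0) \cap G = \emptyset$; pick any rational $\epsilon \in (0, \epsilon_0)$. For that $\epsilon$, $X_\epsilon$ is bounded, so there is $N$ with $\Theta(\epsilon, n)$ false for all $n > N$; since $\aepsilon(n) \cap S = \emptyset$ for those $n$, this is case (2). Effectivity of $N$ and $\epsilon$, and of the determination of which case holds, follows because the decidability of $\realexp$ (under Schanuel's conjecture) is \emph{not} needed here: one only needs that $G$ and the fibres are finite unions of intervals (unconditional $o$-minimality) together with the fact that $G$ is upward closed, so the single question ``is $G = \reals^{>0}$?'' can be settled, e.g., by testing whether $1 \in G$, i.e. whether $\exists N\,\forall n>N : \Theta(1,n)$ — and the cell decomposition underlying $o$-minimality of $\realexp$ is effective enough to extract the threshold $N$ and witness $\epsilon$.

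The main obstacle, and the step to treat with care, is making the effectivity claims rigorous: $o$-minimality alone is not an algorithm, so I would be explicit that the decision procedure uses the (conditional) decidability of $\realexp$ to decide the sentence ``$\exists N\,\forall n > N : \Theta(1, n)$'' and, in case (2), to compute $N$ and $\epsilon$ — alternatively, since case (2) reduces the whole pseudo-reachability question to a finite-horizon check, one may prefer to bound $N$ explicitly. A secondary point worth spelling out is the reduction of $\rho_j^n$ to $\exp$, including the case $\rho_j < 1$ where $\ln \rho_j < 0$ is still a legitimate constant, and the case of blocks with spectral radius exactly $1$, where the relevant geometric sums are $n\rho_j$ rather than $\frac{\rho_j^n - 1}{\rho_j - 1}$, so that $\Theta$ genuinely lies in $\realexp$ (indeed in the polynomial-plus-exponential fragment). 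Everything else is a routine unwinding of the closed form and the semialgebraicity of $\mathcal{T}$ and $S$.
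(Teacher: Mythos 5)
Your first half --- fixing $\epsilon$, treating $n$ as a real parameter, and using $o$-minimality of $\realexp$ to conclude that the fibre $X_\epsilon$ is a finite union of intervals, hence either bounded or containing a ray --- is exactly the paper's argument for the existence of the dichotomy. The gaps are in the effectiveness claims. First, the reduction of ``is $G=\reals^{>0}$?'' to ``is $1\in G$?'' is wrong: upward closure of $G$ only gives that $1\notin G$ implies $G\cap(0,1]=\emptyset$ (which does yield Case~2 with $\epsilon=1$); but if $1\in G$, the set $G$ could still be, say, $(1/2,\infty)$, in which case Case~1 fails for $\epsilon=1/4$. One genuinely has to decide the universally quantified statement over all $\epsilon>0$. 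Second, and more seriously, the appeal to ``the cell decomposition underlying $o$-minimality of $\realexp$ being effective enough'' has no unconditional basis: effective quantifier handling for $\realexp$ is precisely the Macintyre--Wilkie result, which is conditional on Schanuel's conjecture, and your proposed fallback of invoking that decidability would turn the paper's unconditional discrete-time theorem into a conditional one.

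The paper avoids deciding anything in $\realexp$. It writes $\aepsilon(n)\cap S\neq\emptyset$ as a quantifier-free boolean combination of atoms $p_{\alpha,\beta}(\epsilon,n,\rho_1^n,\ldots,\rho_d^n)\bowtie_{\alpha,\beta}0$ (Tarski--Seidenberg, with $\epsilon,n,\rho_1^n,\ldots,\rho_d^n$ as parameters), notes that by $o$-minimality each atom is, for fixed $\epsilon$, eventually constant in $n$, so that ``eventually always'' distributes over the boolean structure, and then analyses each atom as an exponential polynomial $\sum_i q_i(\epsilon,n)R_i^n$ with distinct real algebraic bases $R_1>\cdots>R_k>0$: its eventual truth value is governed by the dominant term $q_1(\epsilon,n)R_1^n$, i.e.\ by a sign condition on the leading coefficient of $q_1(\epsilon,\cdot)$, which is a semialgebraic condition $\psi(\epsilon)$ in $\epsilon$ alone. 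The sentence $\forall\epsilon>0:\psi(\epsilon)$ is then decided unconditionally in $\reals_0$; in Case~2 a witnessing rational $\epsilon$ is found by search, and the threshold $N$ is computed from explicit bounds on when the dominant term overwhelms the remaining ones. You should replace your effectivity paragraph with this dominant-term elimination; your remarks about the $\rho_j=1$ blocks and about $\ln\rho_j<0$ are correct but peripheral.
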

\begin{proof} 
	First we show that the dichotomy holds, putting the issues of effectiveness aside. 
	Let 
	\[
	\Phi(\epsilon, n) = \bigvee_{\alpha \in A} \bigwedge_{\beta \in B} p_{\alpha, \beta}(\epsilon,n, \rho_1^n, \ldots, \rho_d^n) \bowtie_{\alpha, \beta} 0
	\]
	be a quantifier-free formula equivalent to $\aepsilon(n) \cap S \ne \emptyset$. 
	Such $\Phi(\epsilon, n)$ must exist because $\aepsilon(n)$ is semialgebraic with parameters from $\{\epsilon, n, \rho_1^n, \ldots, \rho_d^n\}$ and by the Tarski-Seidenberg theorem, each such set can be described using a quantifier-free formula of the form given above.
	Suppose Case~1 does not hold. Then there exists a particular $\epsilon>0$ such that $\Phi(\epsilon, n) $ does not hold for arbitrarily large~$n$. Treating $n$ as a continuous parameter, consider the set $\{n\in\reals_{\geq0}: \Phi(\epsilon,n) \textrm{ does not hold}\}$. By $o$-minimality of $\realexp$ this set is a finite union of intervals and and by the assumption that Case~1 does not hold,  it contains arbitrarily large integers. Hence it must contain all integers in $(N, \infty)$  for some $N \in \nats$. 
	That is, for all $n>N$ the formula $\Phi(\epsilon, n)$ does not hold.
	
	\proofsubparagraph{Effectiveness.}We now address the issues of effectiveness. Consider the formula 
	\[
	\Psi(\epsilon) = \exists N_\epsilon. \, \forall n > N_\epsilon: \Phi(\epsilon, n).
	\]
	We show that $\Psi(\epsilon)$ is equivalent to a formula $\psi(\epsilon)$ in the language of $\reals_0$. To determine which case holds it then remains to determine the truth value of the sentence $\forall \epsilon: \psi(\epsilon)$.
	
	By the $o$-minimality argument above, given $\epsilon>0$, each $p_{\alpha, \beta}(\epsilon,n, \rho_1^n, \ldots, \rho_d^n) \bowtie_{\alpha, \beta} 0$ either holds for finitely many integer values of $n$ or holds for all sufficiently large integer values $n$. 
	By elementary considerations it follows that $\Psi(\epsilon)$ is equivalent to 
	\[
	\bigvee_{\alpha \in A} \bigwedge_{\beta \in B} \, \exists N_\epsilon. \, \forall n > N_\epsilon:  p_{\alpha, \beta}(\epsilon,n, \rho_1^n, \ldots, \rho_d^n) \bowtie_{\alpha, \beta} 0.
	\]
	Hence it suffices to show how to construct a formula $\psi(\epsilon)$ in the language of $\reals_0$  that is equivalent to $ \exists N_\epsilon. \, \forall n > N_\epsilon:  p_{\alpha, \beta}(\epsilon,n, \rho_1^n, \ldots, \rho_d^n) \bowtie_{\alpha, \beta} 0$.
	For each $\epsilon>0$, the formula first tests if $p_{\alpha, \beta}(\epsilon)$ (as a polynomial in $d+1$ remaining variables) is identically zero.
	If yes, then $\varphi(\epsilon)$ is $\textrm{true}$ or $\textrm{false}$ depending only on $\bowtie_{\alpha, \beta}$.
	Otherwise, write $p_{\alpha, \beta}(\epsilon, n, \rho_1^n, \ldots, \rho_d^n) = \sum_{i=1}^k q_i(\epsilon, n)R_i^n$ where $q_i(\epsilon)$ is not identically zero for all $i$ and $R_1 > \cdots > R_k > 0$ are real algebraic numbers of the form $\rho_1^{p_1}\cdots\rho_d^{p_d}$ for $p_1, \ldots, p_d \in \nats$. 
	Since $\left|q_1(\epsilon, n)R_1^n\right|> \left| \sum_{i=2}^k q_i(\epsilon, n)R_i^n \right|$ for sufficiently large $n$, whether $p_{\alpha, \beta}(\epsilon,n, \rho_1^n, \ldots, \rho_d^n) \bowtie_{\alpha, \beta} 0$ holds for sufficiently large $n$ depends only on $q_1(\epsilon, n)$. 
	Hence we can choose $\psi(\epsilon)$ to be $\lim_{n\to \infty} q_1(\epsilon) \bowtie_{\alpha, \beta} 0$, which amounts to a sign condition on the coefficients of $q_1(\epsilon, n)$. 
	
	\proofsubparagraph{Computing $N$.} Finally, we show that in Case~2, the value $N$ can be effectively computed. To this end, by repeatedly trying smaller and smaller values of $\epsilon$ first compute a rational $e>0$ such that $\Psi(e)$ (equivalently, $\psi(e)$) does not hold. 
	To be able to compute $N$ it then suffices to compute, for a particular $(\alpha, \beta)$, a value $N_{\alpha,\beta}$ such that $p_{\alpha, \beta}(e,n, \rho_1^n, \ldots, \rho_d^n) \bowtie_{\alpha, \beta} 0$ does not hold for all $n > N_{\alpha,\beta}$, assuming that it does not hold for sufficiently large~$n$. We can then take $N$ to be the maximum of $N_{\alpha,\beta}$ over $(\alpha,\beta) \in A\times B$.
	
	To compute $N_{\alpha, \beta}$, consider $p \coloneqq p_{\alpha, \beta}(e)$. Assuming it is not identically zero (otherwise we can choose $N_{\alpha, \beta}$ to be any positive integer), write  $p(n, \rho_1^n, \ldots, \rho_d^n) = \sum_{i=1}^k q_i(n)R_i^n$ where $q_i$ is not identically zero for all $i$ and $R_1 > \cdots > R_k > 0$ are real algebraic. 
	Since $p_{\alpha, \beta}(e, n, \rho_1^n, \ldots, \rho_d^n) \bowtie_{\alpha, \beta} 0$ and hence $p(n, \rho_1^n, \ldots, \rho_d^n) \bowtie_{\alpha, \beta} 0$ do not hold for sufficiently large $n$, it must be the case that $q_1(n) \bowtie_{\alpha, \beta} 0$ does not hold for sufficiently large $n$. 
	Hence it remains to choose $N_{\alpha, \beta}$ large enough so that for all $n > N_{\alpha, \beta}$, $|q_1(n)R_1^n|$ dominates $\left| \sum_{i=2}^k q_i(n)R_i^n \right|$.
\end{proof}

\subsection{From the abstraction back to $\epsilon$-pseudo-orbits}
In this section we consider the relationship between the two cases of \autoref{lem::discrete-dichotomy} and our original pseudo-reachability problem.  
We start with Case~2. Observe that $\O_{\epsilon}(n)\subseteq \aepsilon(n)$ for every $n\in\nats$ and $\epsilon>0$. Therefore, when Case~2 holds, for any $n>N$ and $\epsilon>0$ the target set cannot be reached by $\O_{\epsilon}$. It remains to check pseudo-reachability at time steps $n\leq N$. 
We claim that $S$ is pseudo-reachable if and only
\[
\forall \epsilon. \, \exists n \leq N : (M^nx + cn +d + \epsilon\B(n) )\cap S \neq \emptyset.
\]
Let $\overline{S}$ denote the topological closure of $S$.
We show that the statement above is equivalent to $\exists n \leq N: M^nx + cn +d \in \overline{S}$. Observe that if for all $n \leq N$ the point $M^nx + cn + d$ is not in $\overline{S}$, then by compactness the smallest distance from $\set{ M^nx + cn + d \mid n \leq N}$ to $\overline{S}$ is positive and hence for sufficiently small $\epsilon$ the target $S$ cannot be $\epsilon$-pseudo-reached within the first $N$ steps. Conversely, if $M^nx + cn +d \in \overline{S} $ for some $n \leq N$, then because $\B(n)$ is full dimensional and contains $\zerovec$ in its interior, it follows that $(M^nx + cn +d + \epsilon\B(n) )\cap S \neq \emptyset$ for all $\epsilon > 0$. Therefore, in Case~2 pseudo-reachability can be decided by simply checking if $\set{ M^nx + cn + d \mid n \leq N}$ reaches $\overline{S}$.

Next we will show that $S$ is pseudo-reachable if Case~1 holds. 
Given $z \in \mathcal{T}$, we define a ``localisation'' of the abstraction at the point $z$ as $\aepsilon(n)(z) := D^nz + cn +d + \epsilon \B(n)$. 
Observe that $\aepsilon(n) = \{\aepsilon(n)(z) : z \in \T\}$.
This definition of a localisation will allow us to select a ``concrete trajectory'' from the set of all possible (abstract) trajectories.

Fix $\epsilon>0$ and let $T_n := \{z \in \T: \aepsilon(n)(z) \textrm{ intersects } S\}$. The next lemma implies that the sequence $T_n$ must tend towards a limiting shape; i.e. it cannot ``jump around'' forever. 
\begin{lemma}
	\label{lem::limiting-shape}
	Let $T_n = \{z : \varphi(z, n, \rho_1^n, \ldots, \rho_d^n)\}$, where $\varphi$ is a semialgebraic predicate and $\rho_1, \ldots, \rho_d$ are real algebraic, be a family of non-empty sets contained in a compact set $\T$. There exists a non-empty limiting set $L \subseteq \T$ to which the sequence $T_n$ converges as $n\to \infty$, in the following sense.
	\begin{enumerate}[a]
		\item For every $\epsilon>0$, there exists $N$ such that for all $n > N$, $T_n \subseteq L + B(\zerovec,\epsilon)$.
		\item For all $z \in L$ and $\epsilon>0$ there exists $N$ such that for all $n > N$, $z+B(\zerovec,\epsilon)$ intersects $T_n$.
	\end{enumerate}
\end{lemma}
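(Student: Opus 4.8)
The plan is to lift the discrete family $(T_n)$ to a continuous one indexed by $t\in\reals^{\ge 0}$, take the limiting set to be its ``limit superior'', and then read off (a) from compactness of $\T$ and (b) from the $o$-minimality of $\realexp$. First I would pass to the continuous parameter. Since the $\rho_i$ arise as spectral radii we may assume each $\rho_i>0$ (a block with $\rho_i=0$ contributes the constant $0$ as soon as $n\ge 1$ and can be eliminated), so that $\rho_i^t=\exp(t\ln\rho_i)$ is definable in $\realexp$; hence $\widetilde{T}_t:=\{z\in\T:\varphi(z,t,\rho_1^t,\ldots,\rho_d^t)\}$, for $t\in\reals^{\ge 0}$, is a family of subsets of the compact set $\T$ that is definable in $\realexp$ and satisfies $\widetilde{T}_n=T_n$ for $n\in\nats$. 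By $o$-minimality the definable set $\{t:\widetilde{T}_t\neq\emptyset\}$ is a finite union of intervals, and as it contains arbitrarily large integers it must contain a half-line $(T_0,\infty)$; from here on we only consider $t>T_0$.

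Next I would define the limiting set. For $T>T_0$ put $U_T:=\bigcup_{t>T}\widetilde{T}_t$; the sets $\operatorname{cl}(U_T)$ are non-empty compact subsets of $\T$, nested decreasingly in $T$, so by the finite intersection property $L:=\bigcap_{T>T_0}\operatorname{cl}(U_T)$ is non-empty and compact. Unwinding the definitions gives $z\in L \iff \liminf_{t\to\infty}\operatorname{dist}(z,\widetilde{T}_t)=0$, and since $\widetilde{T}_t\subseteq\T$ this forces $L\subseteq\T$ as required.

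For (a) I would prove the slightly stronger statement that for every $\epsilon>0$ there is $T_1$ with $\widetilde{T}_t\subseteq L+B(\zerovec,\epsilon)$ for all $t>T_1$; (a) then follows by taking $N=\lceil T_1\rceil$. If this failed there would be $t_m\to\infty$ and $z_m\in\widetilde{T}_{t_m}$ with $\operatorname{dist}(z_m,L)>\epsilon$; by compactness of $\T$ some subsequence of $(z_m)$ converges to a point $z^*\in\T$ with $\operatorname{dist}(z^*,L)\ge\epsilon$. But for every fixed $T>T_0$ we have $z_m\in U_T$ for all large $m$, so $z^*\in\operatorname{cl}(U_T)$; since this holds for all $T$, $z^*\in L$, contradicting $\operatorname{dist}(z^*,L)\ge\epsilon>0$. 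This step uses only compactness, not $o$-minimality.

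The crux is (b), and I expect it to be the main point requiring care. Fix $z\in L$ and $\epsilon>0$ and set $f(t):=\operatorname{dist}(z,\widetilde{T}_t)$, which is definable in $\realexp$ and bounded (by $\operatorname{diam}(\T)$, since $z\in\T$). For each rational $q$ the set $\{t:f(t)<q\}$ is definable in $\realexp$, hence a finite union of intervals, hence, far enough out, either all of a half-line or none of it; consequently no $q$ lies strictly between $\liminf_{t\to\infty}f(t)$ and $\limsup_{t\to\infty}f(t)$, so $\lim_{t\to\infty}f(t)$ exists, and since $z\in L$ it equals $\liminf f=0$. Therefore $\operatorname{dist}(z,T_n)=f(n)<\epsilon$ for all sufficiently large $n$, which places a point of $T_n$ inside $B(z,\epsilon)=z+B(\zerovec,\epsilon)$; the corresponding threshold is the desired $N$. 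This is precisely where the informal claim that $T_n$ ``cannot jump around forever'' gets formalised: $o$-minimality is essential because without it $\liminf f=0$ would not force $\lim f=0$, and $f$ could oscillate so that $z$ is only intermittently close to $T_n$.
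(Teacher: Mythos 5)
Your proof is correct and follows essentially the same route as the paper's: lift $T_n$ to a continuous family $T_t$, take $L=\{z:\liminf_t \operatorname{dist}(z,T_t)=0\}$ (your nested-closure intersection is the same set), prove (a) by compactness via an accumulation-point contradiction, and prove (b) by applying $o$-minimality to a definable set of times that is unbounded above and hence contains a half-line. The only differences are cosmetic: you additionally verify that $L$ is non-empty, and in (b) you show $\operatorname{dist}(z,T_t)\to 0$ via rational thresholds where the paper applies $o$-minimality directly to $\{t: z+B(\zerovec,\epsilon)\text{ meets }T_t\}$.
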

\begin{proof}
	Write $\varphi(z, n, \rho_1^n,\ldots, \rho_d^n) = \bigvee_{\alpha \in A} \bigwedge_{\beta \in B} p_{\alpha, \beta}(z, n, \rho_1^n, \ldots, \rho_d^n) \bowtie_{\alpha, \beta} 0$. We can define the sequence $\langle T_t \mid t \in \reals\rangle$ as $T_t = \{z: \varphi(z, t, \rho_1^t, \ldots, \rho_d^t)\}$. Let $L = \{x: \liminf d(x, T_t) = 0\}$ where $d(x, T_t)$ denotes the shortest Euclidean distance from $x$ to a point in $T_t$. 
	
	We prove the first claim by contradiction. 
	Suppose there exists $\epsilon>0$ such that at infinitely many unbounded time steps $t_1 < t_2 < \ldots$ there are points $z_1, z_2, \ldots$ such that $z_i \in T_i$ but $z_i \notin L + B(\zerovec,\epsilon)$. 
	Then the sequence $z_i$ must have an accumulation point $z$ in $\T \setminus L$. But $z$ will also satisfy $\liminf d(z, T_t) = 0$ and hence $z \in L$, a contradiction.  
	
	We prove the second claim using $o$-minimality of $\reals_{\exp}$. Fix $z \in L$ and $\epsilon > 0$ and consider the set $Z = \{t \in \reals: z + B(\zerovec,\epsilon) \textrm{ intersects } T_t\}$. The set $Z$ is $o$-minimal, and since $z \in L$, it is unbounded from above. Hence it must contain an interval of the form $(N, \infty)$, which implies the desired result.
\end{proof}	
One can also show that the set $L$ described above is in fact semialgebraic, but this is not necessary for our arguments. We are now ready to show that $S$ is pseudo-reachable if Case~1 of \autoref{lem::discrete-dichotomy} holds.
\begin{lemma}
	If for every $\epsilon>0$ there exists $N_\epsilon$ such that for all $n > N_\epsilon$, $\aepsilon(n)$ intersects $S$ then $S$ is pseudo-reachable.
\end{lemma}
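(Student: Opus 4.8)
The goal is to show that for every $\epsilon > 0$ there is an integer $n$ with $\PO_\epsilon(n) \cap S \neq \emptyset$; by \autoref{convenient-control-set} and the closed-form reductions of this section this is exactly the statement that $S$ is pseudo-reachable. The plan is to fix $\epsilon > 0$, feed the hypothesis the smaller perturbation level $\epsilon/2$, and then exploit the identity $\PO_\epsilon(n) = \aepsilon(n)(R^n x)$ together with the extra room between $\tfrac{\epsilon}{2}\B(n)$ and $\epsilon\B(n)$: this room will let us replace an arbitrary \emph{abstract} point $z \in \mathcal{T}$ of a good localisation $\mathcal{A}_{\epsilon/2}(n)(z)$ by the genuine orbit point $R^n x$, provided $z$ and $R^n x$ are close and $n$ is large.

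First I would set $T_n := \{ z \in \mathcal{T} : \mathcal{A}_{\epsilon/2}(n)(z) \cap S \neq \emptyset \}$ (the localisation set \emph{at perturbation level $\epsilon/2$}), which is of the form required by \autoref{lem::limiting-shape}; by the hypothesis applied at level $\epsilon/2$, $T_n$ is non-empty for all $n$ beyond some $N'$. Applying \autoref{lem::limiting-shape} yields a non-empty limiting set $L \subseteq \mathcal{T}$; pick any $z^\star \in L$. I would then fix $\delta > 0$ small enough that
\[
\rho_j^m \, \delta \;<\; \tfrac{\epsilon}{2}\sum_{i=0}^{m-1}\rho_j^i \qquad \text{for every } m \ge 1 \text{ and every block index } j ,
\]
a short case split on whether $\rho_j$ is $<1$, $=1$ or $>1$ showing that $\delta < \epsilon/2$ together with $\delta < \epsilon/(2\rho_j)$ for each $\rho_j > 1$ suffices, uniformly in $m$. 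Since $z^\star \in \mathcal{T} = \operatorname{cl}(\{R^n x : n \in \nats\})$, Kronecker's theorem (as recalled before \autoref{lem::discrete-dichotomy}) gives infinitely many integers $0 < m_1 < m_2 < \cdots$ with $\|R^{m_i}x - z^\star\| < \delta/2$; and by part (b) of \autoref{lem::limiting-shape} there is $N''$ such that $z^\star + B(\zerovec,\delta/2)$ meets $T_n$ for all $n > N''$. Choosing $m := m_i$ for some $i$ with $m_i > \max(N', N'')$, I obtain a point $z_m \in T_m$ with $\|z_m - z^\star\| < \delta/2$, hence $\|R^m x - z_m\| < \delta$.

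It then remains to transfer a point $p \in S \cap \mathcal{A}_{\epsilon/2}(m)(z_m)$ into $\PO_\epsilon(m)$. Writing $p = D^m z_m + cm + d + v$ with $v \in \tfrac{\epsilon}{2}\B(m)$, we have $p = D^m R^m x + cm + d + w$ where $w = v - D^m(R^m x - z_m)$. The key point is that $D^m(R^m x - z_m) \in \tfrac{\epsilon}{2}\B(m)$: in the $j$-th block $D^m$ scales by $\rho_j^m$, so this vector has $j$-th block of norm $< \rho_j^m\delta$, while the $j$-th radius of $\tfrac{\epsilon}{2}\B(m)$ is $\tfrac{\epsilon}{2}\sum_{i=0}^{m-1}\rho_j^i$, and $\delta$ was chosen precisely so that the former is the smaller. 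As $\B(m)$ is convex and symmetric, $\tfrac{\epsilon}{2}\B(m) + \tfrac{\epsilon}{2}\B(m) = \epsilon\B(m)$, so $w \in \epsilon\B(m)$ and $p \in D^m R^m x + cm + d + \epsilon\B(m) = \PO_\epsilon(m)$. Thus $\PO_\epsilon(m) \cap S \ne \emptyset$; since $\epsilon$ was arbitrary, $S$ is pseudo-reachable (and in fact $\epsilon$-pseudo-reachable at infinitely many times, one per sufficiently large $m_i$).

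I expect the main obstacle to be the amplification by $D^m$: because some $\rho_j$ may exceed $1$, the norm of $D^m$ grows exponentially, so closeness inside the compact set $\mathcal{T}$ does not on its own survive the application of $D^m$. The resolution — and the step needing the most care — is that the slack $\tfrac{\epsilon}{2}\B(m)$ grows at exactly the same exponential rate $\rho_j^m$ in each block, so a single $\delta$ independent of $m$ absorbs the amplified error; pinning down this uniform estimate across the three regimes for $\rho_j$ is the crux. A secondary subtlety is that the limiting set $L$ of \autoref{lem::limiting-shape} is genuinely needed: it supplies one fixed target $z^\star$ that is approached both by $T_n$ for all large $n$ and by $R^m x$ for infinitely many $m$, which is what synchronises Kronecker's ``infinitely often'' with the ``eventually'' guarantee of Case~1 — without it, the localisation targets $z_m$ could drift and never line up with the times supplied by Kronecker's theorem.
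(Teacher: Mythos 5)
Your proof is correct and follows essentially the same route as the paper's: apply the hypothesis at level $\epsilon/2$, extract a limit point of the localisation sets $T_n$ via \autoref{lem::limiting-shape}, choose a radius ($\delta$, the paper's $\epsilon'$) uniformly in $n$ so that $D^m$ applied to a $\delta$-ball stays inside $\tfrac{\epsilon}{2}\B(m)$, synchronise Kronecker's theorem with part (b) of the limiting-shape lemma, and absorb the abstraction error into the spare $\tfrac{\epsilon}{2}\B(m)$. Your explicit three-case verification of the uniform bound $\rho_j^m\delta < \tfrac{\epsilon}{2}\sum_{i=0}^{m-1}\rho_j^i$ spells out a step the paper only sketches informally.
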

The main idea of the proof is to use the assumption that $\mathcal{A}_{\epsilon/2}(n)$ intersects $S$ for sufficiently large $n$ to construct an $\epsilon$-pseudo-orbit that hits $S$. 
Intuitively, in order to simulate $\mathcal{A}_{\epsilon/2}(n)$ using an $\epsilon$-pseudo-orbit, $\epsilon/2$ of the total control allowance is used to replicate the effect of the control inputs (of size at most $\epsilon/2$, corresponding to the $\frac{\epsilon}{2}\B(n)$ term in the definition of $\mathcal{A}_{\epsilon/2}(n)$) and the remaining $\epsilon/2$ is used to compensate for the abstraction from the starting point $a$ to the set $\T$. In fact, we do not know if one can deduce that $S$ is $\epsilon$-pseudo-reachable from knowing that $\aepsilon(n) \textrm{ intersects } S$ for sufficiently large $n$. This illustrates the reason why the pseudo-reachability problem is easier than the $\epsilon$-pseudo-reachability problem; see \autoref{hardness} for a more concrete argument. 
\begin{proof}
	Fix $\epsilon>0$. We show how to construct an $\epsilon$-pseudo-orbit that hits $S$. Consider~$\mathcal{A}_{\epsilon/2}(n)$. 
	By assumption, there exists $N_1$ such that for all $n > N_1$, $\mathcal{A}_{\epsilon/2}(n)$ intersects $S$. 
	We now investigate which localisations of the abstraction are responsible for intersecting $S$.
	Apply \autoref{lem::limiting-shape} to the sequence of sets $T_n = \{z \in \mathcal{T} : \mathcal{A}_{\epsilon/2}(n)(z) \textrm{ intersects } S\}$ to obtain their ``limit'' $L$. Fix any $p \in L$.
	
	Let  $\epsilon'$ be small enough so that $\epsilon' D^n \B \subseteq \frac{\epsilon}{2}\B(n)$ for all $n > 0$. 
	Intuitively, such $\epsilon'$ must exist because $D^{n}\B$ and $D^{n-1} \B$ only differ by at most a constant factor that only depends on the magnitudes $\rho_1, \ldots, \rho_d$ of eigenvalues of $M$, and we have that $D^{n-1}\B \subseteq \sum_{i=0}^{n-1}D^i\B  = \B(n)$.
	By \autoref{lem::limiting-shape}~(b), there exists $N>N_1$ such that for all $n > N$, $p+B(\zerovec, \epsilon'/2)$ intersects~$T_n$.
	That is, for all $n > N$ there exists $p_n \in \T$ such that $||p - p_n|| < \epsilon'/2 $ and $p_n \in T_n$. Equivalently,
	\[
	||p - p_n|| < \epsilon'/2 \textrm{ and } \mathcal{A}_{\epsilon/2}(n)(p_n) \textrm{ intersects } S.
	\]
	
	By Kronecker's theorem there must exist $m > N$ such that $||R^{m}x - p|| < \epsilon'/2$. 
	Hence we have $||R^mx - p_m|| < \epsilon'$ which implies $R^mx - p_m \in \epsilon' \B$ and hence $D^m(R^mx - p_m) \in \epsilon'D^m\B$. 
	Since by construction of $\epsilon'$ we have $\epsilon'D^m \B \subseteq \frac{\epsilon}{2}\B(m)$, it follows that $D^m(R^mx - p_m) \in \frac{\epsilon}{2}\B(m)$ and hence $D^mp_m \in D^m R^m x+\frac{\epsilon}{2}\B(m)$. Therefore,
	\[
	\PO_\epsilon(m) = (D^m R^m x+\frac{\epsilon}{2}\B(m))  + cm + d + \frac{\epsilon}{2}\B(m) \supseteq D^mp_m + cm +d + \frac{\epsilon}{2}\B(m) = \mathcal{A}_{\epsilon/2}(m)(p_m).
	\]
	Since $\mathcal{A}_{\epsilon/2}(m)(p_m)$ intersects $S$, it then follows that $\PO_\epsilon(m)$ too must intersect $S$. 
\end{proof}

\subsection{The algorithm}
To summarise, the analysis above gives us the following algorithm for determining if $S$ is pseudo-reachable, i.e. if $\forall \epsilon>0. \, \exists n: \PO_\epsilon(n) \cap S \neq \emptyset$. 
Let $\varphi(n, \epsilon)$ be a quantifier-free formula in $\reals_{\exp}$ defining the abstraction $\aepsilon(n)$. First determine, using the algorithm described in the proof of \autoref{lem::discrete-dichotomy}, whether Case~1 or Case~2 holds. If the former holds, then conclude that $S$ is pseudo-reachable. If Case~2 holds, then compute the value of $N$ effectively and check if there exists $n < N$ such that $(M^nx + cn + d) \cap \overline{S} \neq \emptyset$.

\section{Skolem-hardness of the $\epsilon$-pseudo-reachability problem}
\label{hardness}
In this section we consider the $\epsilon$-pseudo-reachability problem for discrete diagonalisable systems: given diagonalisable $M$, starting point $s$, a target set $S$ and $\epsilon>0$, decide whether there exists $n$ such that $M^ns + \sum_{k=0}^{n-1}M^kB(\zerovec, \epsilon) \cap S \neq 0$. 
This problem is also known as the reachability problem for linear time-invariant systems \cite{hscc19} with the control set $B(\zerovec, \epsilon)$. 
We will reduce a hard subclass of the Skolem problem to our $\epsilon$-pseudo-reachability problem.

The Skolem problem is not known to be decidable for orders $d \geq 5$, even for diagonalisable recurrences. 
The largest class of sequences for which decidability is known is the MSTV (Mignotte-Shorey-Tijdeman-Vereschagin) class, which consists of all linear recurrence sequences over integers that (i) have at most three dominant roots with respect to the usual (Archimedean) absolute or (ii) have at most two dominant roots with respect to a $p$-adic absolute value
\cite{lics22}. 
We consider the Skolem problem for integer sequences whose roots $\rho, \lambda_1, \ldots, \lambda_d$ satisfy $\rho = |\lambda_1|= \cdots = |\lambda_d|$. 
This class of sequences contains many instances that are not in the MSTV class and hence is a hard subclass of the Skolem problem.

Recall that any linear recurrence sequence can be written as $u_n = c^\top M^n s$ where $M$ is the companion matrix of $u_n$ whose eigenvalues are the roots of $u_n$. Let $u_n$ be a diagonalisable sequence that belongs to the hard subclass described above, i.e. $u_n  = c^\top M^n s$ where  $M = \operatorname{diag}(\Lambda_1, \ldots, \Lambda_d, \rho)$ and $\Lambda_i$ is a $2 \times 2$ real Jordan block with $\rho(\Lambda_i) = \rho$ for $1 \leq i \leq d$. 
We reduce the problem ``does $u_n$ have a zero?'' to an $\epsilon$-pseudo-reachability problem. 


Consider the sequence $v_n=u_n^2$. Observe that $v_n = \sum_{i=1}^L c_i \Gamma_i^n s_i + Cr^n$ where
\begin{itemize}
	\item $r=\rho^2$,
	\item $\Gamma_i$ is a $2 \times 2$ real Jordan block with $\rho(\Gamma_i) = r$ for $1 \leq i \leq L$,
	\item $c_i, s_i \in \reals^2$ for $1 \leq i \leq L$, and
	\item $C > 0$.
\end{itemize}
The first two statements follow from the fact that the eigenvalues of $v_n$ are products of eigenvalues of $u_n$.
That $C > 0$ can be deduced as follows. Consider $w_n = \sum_{i=1}^L c_i \Gamma_i^n s_i $. It only has non-real roots and hence by \cite{nagasaka1990asymptotic} is infinitely often positive and negative. Hence if $C$ is not positive, then $v_n < 0$ for infinitely many $n$, which contradicts the fact that $v_n \geq 0$.

Next observe that $u_n$ has a zero iff there exists $n$ such that $v_n \leq 0$. 
Since we are interested  only in the sign of $v_n$, by scaling $v_n$ by $C(2r)^n$ if necessary we assume that $r \in (0,1)$ and $C = 1$.
We will construct an instance of the $\epsilon$-pseudo-reachability problem that is positive if and only if there exists $n$ such that $v_n \leq 0$.

Define
\begin{itemize}
	\item $A = \operatorname{diag}(\Gamma_1, \ldots, \Gamma_L)$,
	\item $s = (s_1, \ldots, s_L)$ and $c = (c_1, \ldots, c_L)$,
	\item $\epsilon = \frac{1-r}{||c||}$, and
	\item $H = \{z :c^\top \cdot z + 1 \leq 0\}$.
\end{itemize}
Observe that $H$ is $\epsilon$-pseudo-reachable if and only if $A^ns + \sum_{i=0}^{n-1}A^iB(\zerovec, \epsilon) \cap H \neq 0$ for some~$n$. 
Since $A^iB(\zerovec, \epsilon) = B(\zerovec, r^i \epsilon)$,  we have $\sum_{i=0}^{n-1}A^iB(\zerovec, \epsilon)  = B(\zerovec, \frac{1-r^n}{1-r}\epsilon) \coloneqq \B(n)$ and
\[
\textrm{$H$ is $\epsilon$-pseudo-reachable } \iff
\min_{z \in \B(n)} c \cdot (A^n s + z) +1\textrm{ is $\leq 0$ for some $n$.}
\]
We will show that in fact $\min_{z \in \B(n)} c \cdot (A^n s + z) +1 = v_n$, which will conclude the proof.
\begin{equation*}
\begin{split}
\min_{z \in \B(n)} c \cdot (A^n s + z) + 1  & = \sum_{i=1}^L c_i \Gamma_i^n s_i + 1 +  \min_{z \in \B(n)} c \cdot z \\
&= \sum_{i=1}^L c_i \Gamma_i^n s_i + 1  - ||c|| \frac{1-r^n}{1-r}\epsilon\\
&=\sum_{i=1}^L c_i \Gamma_i^n s_i  + r^n \\&= v_n.
\end{split}
\end{equation*}
	
\section{The continuous-time pseudo-reachability problem}\label{sec:cont_pseudo_reach}

In this section we show that the approach we described in \autoref{sec::disc-decidability} for deciding the discrete-time pseudo-reachability problem for diagonalisable systems also works in the continuous setting with one important difference: to handle Case~2 of the dichotomy lemma (exactly the same as \autoref{lem::discrete-dichotomy}) we need to solve the \emph{bounded-time reachability problem for continuous-time affine dynamical systems}, which is only known to be decidable assuming Schanuel's conjecture~\cite{contSkolem}.
For detailed proofs see the full version of the paper.

Let $M = \operatorname{diag}(\Lambda_1, \ldots, \Lambda_k, \rho_{k+1}, \ldots, \rho_{d})\in(\reals\cap\algebraics)^{L\times L}$ be a diagonalisable matrix in real Jordan form, $s\in \rats^{L}$ be a starting point, $b\in \rats^{L}$ be an affine term and $S\subseteq \reals^{L}$ be a semialgebraic target set. 
The trajectory of the system (in the absence of additional control inputs) is given by 
\[
x(t)=e^{Mt}s + \int_{0}^{t} e^{Mh}b \,dh.
\]
Intuitively, while in the discrete setting control inputs are applied after each unit of time and thus are represented by a sequence $\langle d_n \mid n\in \nats\rangle$, in the continuous setting they are represented by a continuous function $\Delta: \reals_{\geq 0} \to \reals^{L}$. Hence an $\epsilon$-pseudo-orbit is defined as a trajectory
\[
x(t)=e^{Mt}s + \int_{0}^{t} e^{Mh}b \,\, dh + \int_{0}^{t} e^{Mh}\Delta(t-h) \, dh.
\]
for some control signal $\Delta: \reals_{\geq 0} \to \reals^L$ satisfying $||\Delta_\epsilon(t)|| \le \epsilon$ for all $t \geq 0$.
The pseudo-reachability problem is then defined in the same way as before: decide whether for every $\epsilon>0$ there exists an $\epsilon$-pseudo-orbit that reaches $S$.

Let $\B$ be the same control set as defined in \autoref{sec::closed-form}.
For $1 \leq j \leq k$, let $r_j = \operatorname{Re}(\lambda_j)$ and $\omega_j = \operatorname{Im}(\lambda_j)$  where $\lambda_j$ is a non-real eigenvalue of the block $\Lambda_j$.
For $k < j \leq d$ let $r_j = \rho_j$ and $\omega_j = 0$.
By using essentially the same arguments as in \autoref{sec::closed-form}, we can show that the pseudo-reachability problem is equivalent to determining the truth of
\[
\forall \epsilon>0. \, \exists t: (e^{Mt}x + ct + d + \epsilon\B(t)) \cap S \neq \emptyset
\]
where $x, c, d$ are $L$-dimensional vectors and $\B(t) = \{z : \varphi(z,t, e^{r_1t}, \ldots, e^{r_dt})\}$ for semialgebraic predicate $\varphi$. We denote the term $e^{Mt}x + ct + d + \B(t)$ by $\PO_{\epsilon}(n)$.

To define a convenient abstraction, we again write $e^{Mt} = D(t)R(t)$ where
$D(t)\coloneqq \operatorname{diag}(e^{r_1t}, e^{r_1t},\dots,e^{r_kt},e^{r_kt},e^{r_{k+1}t},e^{r_{k+2}t},\dots,e^{r_{k+d}t})$ is diagonal and
$R(t)$ is a block diagonal matrix whose blocks are rotation matrices of the form
$\begin{bmatrix}\cos(\omega_jt)&-\sin(\omega_it)\\\sin(\omega_jt)&\cos(\omega_jt)\end{bmatrix}$ for $1 \leq j \leq k$ and are of the form $\Omega_i=\begin{bmatrix}1\end{bmatrix}$ for $k+1\leq j \leq d$.
Just as in the discrete case, we next define 
\[
\mathcal{T} \coloneqq \operatorname{cl}( \{R(t)x : t \in \reals_{\geq 0}\})
\textrm{ and }
\aepsilon (t) \coloneqq D(t)\mathcal{T} + ct +d + \epsilon \B(t),
\]
where $\mathcal{T}$ is again semialgebraic and effectively computable \cite{contSkolem}
and $\aepsilon$ acts as an abstraction of $\PO_\epsilon$. In particular, for all $\epsilon>0$ and $t\in \reals_{\geq 0}$, we have $\PO_{\epsilon}(t)\subseteq\aepsilon(t)$. Moreover, observe that $\aepsilon(t) = \varphi(t, e^{r_1t}, \ldots, e^{r_dt})$ for a semialgebraic function $\varphi$, which is the most important property we need. 
We use $\aepsilon(t)$ in the same way we used $\aepsilon(n)$ in the discrete case to arrive at the following dichotomy lemma.
\begin{lemma}
	\label{lem::cont-dichotomy}
	Either
	\begin{enumerate}
		\item for every $\epsilon > 0$ there exists $T_\epsilon$ such that for all $t > T_\epsilon$, $\aepsilon(t)$ intersects $S$, or
		\item there exist $T$ and $\epsilon>0$, both effectively computable, such that $\aepsilon(t)$ does not intersect $S$ for all $t > T$. 
	\end{enumerate}
	Moreover, it can be effectively determined which case holds.
\end{lemma}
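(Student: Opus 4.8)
The plan is to mirror, almost verbatim, the proof of the discrete dichotomy in \autoref{lem::discrete-dichotomy}, replacing the discrete time parameter $n \in \nats$ with the continuous time parameter $t \in \reals_{\geq 0}$ and replacing appeals to ``sufficiently large integer $n$'' with appeals to ``sufficiently large real $t$''. Concretely, I would first invoke the Tarski--Seidenberg theorem to write $\aepsilon(t) \cap S \neq \emptyset$ as a quantifier-free formula $\Phi(\epsilon, t) = \bigvee_{\alpha \in A} \bigwedge_{\beta \in B} p_{\alpha,\beta}(\epsilon, t, e^{r_1 t}, \ldots, e^{r_d t}) \bowtie_{\alpha,\beta} 0$; this is legitimate because $\aepsilon(t)$ is semialgebraic with parameters drawn from $\{\epsilon, t, e^{r_1 t}, \ldots, e^{r_d t}\}$, which is exactly the property flagged in the paragraph preceding the lemma.

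For the dichotomy itself (ignoring effectiveness): if Case~1 fails, fix a witnessing $\epsilon > 0$ for which $\Phi(\epsilon, t)$ fails at arbitrarily large $t$; the set $\{t \in \reals_{\geq 0} : \Phi(\epsilon, t) \textrm{ fails}\}$ is definable in $\realexp$, hence $o$-minimal, hence a finite union of intervals, and being unbounded above it must contain a final ray $(T, \infty)$, which is Case~2. This is even cleaner than the discrete case because no passage from reals to integers is needed. For effectiveness, I would again pass to $\Psi(\epsilon) := \exists T_\epsilon.\, \forall t > T_\epsilon : \Phi(\epsilon, t)$, push the quantifiers inside the $\bigvee\bigwedge$ exactly as before, and reduce to showing that $\exists T_\epsilon.\, \forall t > T_\epsilon : p_{\alpha,\beta}(\epsilon, t, e^{r_1 t}, \ldots, e^{r_d t}) \bowtie_{\alpha,\beta} 0$ is expressible in $\reals_0$. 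Writing $p_{\alpha,\beta}(\epsilon, t, e^{r_1 t}, \ldots, e^{r_d t}) = \sum_{i=1}^m q_i(\epsilon, t) e^{R_i t}$ with $R_1 > \cdots > R_m$ (each $R_i$ an integer combination of the $r_j$) and $q_i$ polynomials, the dominant term $q_1(\epsilon, t) e^{R_1 t}$ controls the sign for large $t$, so $\Psi(\epsilon)$ becomes a sign condition on the leading coefficient (in $t$) of $q_1(\epsilon, t)$, manifestly a $\reals_0$-formula $\psi(\epsilon)$; deciding which case holds is then deciding $\forall \epsilon : \psi(\epsilon)$ in $\reals_0$. Finally, to compute $T$ in Case~2, search for a rational $e>0$ with $\psi(e)$ false, then for each $(\alpha,\beta)$ bound the crossover time beyond which $|q_1(e,t)e^{R_1 t}|$ dominates $|\sum_{i\geq 2} q_i(e,t) e^{R_i t}|$ using explicit estimates on the algebraic data, and take $T$ to be the maximum over $(\alpha,\beta)$.

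The only genuinely new wrinkle relative to the discrete proof is that the exponents $R_i$ are now of the form $e^{(\sum_j a_j r_j) t}$ rather than $\rho_1^{a_1}\cdots\rho_d^{a_d}$ raised to the $n$; one must check the $R_i$ can be effectively ordered, but this is immediate since the $r_j$ are real algebraic and one can decide sign conditions on their integer combinations. The main obstacle, such as it is, is purely bookkeeping: verifying that ``dominant exponential term controls the sign for large $t$'' still works when the coefficients $q_i(\epsilon, t)$ are polynomials in $t$ (they may vanish or change sign, but only finitely often, and $e^{(R_1 - R_2)t}$ grows faster than any polynomial), and making the crossover-time estimate explicit enough to compute $T$. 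Since all of this is routine asymptotic analysis of exponential polynomials over a real-algebraic field, I do not anticipate any real difficulty, and the lemma follows.
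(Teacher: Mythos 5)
Your proposal follows essentially the same route as the paper's own proof, which is itself a near-verbatim transcription of the discrete argument: Tarski--Seidenberg to get the quantifier-free form $\Phi(\epsilon,t)$, $o$-minimality of $\realexp$ to turn ``fails at arbitrarily large $t$'' into ``fails on a final ray'', reduction of $\Psi(\epsilon)$ to a sign condition on the leading coefficient of the dominant exponential term to obtain a $\reals_0$-formula $\psi(\epsilon)$, and an explicit crossover-time bound to compute $T$. The wrinkles you flag (effective ordering of the $R_i = e^{\sum_j p_j r_j}$ and the fact that polynomial coefficients cannot overcome exponential separation) are exactly the points the paper also relies on, so the proposal is correct.
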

\begin{proof} 
	First we show that the dichotomy holds, putting the issues of effectiveness aside. 
	Let 
	\[
	\Phi(\epsilon, t) = \bigvee_{\alpha \in A} \bigwedge_{\beta \in B} p_{\alpha, \beta}(\epsilon,t,e^{r_1t}, \ldots, e^{r_dt}) \bowtie_{\alpha, \beta} 0
	\]
	be a quantifier-free formula equivalent to $\aepsilon(n) \cap S \ne \emptyset$. 
	Suppose Case~1 does not hold. Then there exists a particular $\epsilon>0$ such that $\Phi(\epsilon, t) $ does not hold for arbitrarily large~$t$. Consider the set $\{t \geq 0: \Phi(\epsilon,t) \textrm{ does not hold}\}$. By $o$-minimality of $\realexp$ this set is a finite union of intervals and since it contains arbitrarily large real numbers by assumption, it must contain an unbounded interval $(T, \infty)$.
	That is, for all $t>T$ the formula $\Phi(\epsilon, t)$ does not hold.
	
	\proofsubparagraph{Effectiveness.}We now address the issues of effectiveness. Consider the formula 
	\[
	\Psi(\epsilon) = \exists T_\epsilon. \, \forall t > T_\epsilon: \Phi(\epsilon, t).
	\]
	We show that $\Psi(\epsilon)$ is equivalent to a formula $\psi(\epsilon)$ in the language of $\reals_0$. To determine which case holds it then remains to determine the truth value of the sentence $\forall \epsilon: \psi(\epsilon)$.
	
	By the $o$-minimality argument above, given $\epsilon>0$, each the values of $t$ for which $p_{\alpha, \beta}(\epsilon,t,e^{r_1t}, \ldots, e^{r_dt}) \bowtie_{\alpha, \beta} 0$ holds is either bounded or contains an unbounded interval. 
	By elementary considerations it follows that $\Psi(\epsilon)$ is equivalent to 
	\[
	\bigvee_{\alpha \in A} \bigwedge_{\beta \in B} \, \exists T_\epsilon. \, \forall t > T_\epsilon:  p_{\alpha, \beta}(\epsilon,t,e^{r_1t}, \ldots, e^{r_dt}) \bowtie_{\alpha, \beta} 0.
	\]
	Hence it suffices to show how to construct a formula in the language of $\reals_0$  that is equivalent to $\exists T_\epsilon. \, \forall t > T_\epsilon:  p_{\alpha, \beta}(\epsilon,t,e^{r_1t}, \ldots, e^{r_dt}) \bowtie_{\alpha, \beta} 0$.
	For each $\epsilon>0$, the formula first tests if $p_{\alpha, \beta}(\epsilon)$ (as a polynomial in $d+1$ remaining variables) is identically zero.
	If yes, then $\varphi(\epsilon)$ is $\textbf{true}$ or $\textbf{false}$ depending only on $\bowtie_{\alpha, \beta}$.
	Otherwise, write $p_{\alpha, \beta}(\epsilon,t,e^{r_1t}, \ldots, e^{r_dt}) = \sum_{i=1}^k q_i(\epsilon, n)R_i^n$ where $q_i(\epsilon)$ is not identically zero for all $i$ and $R_1 > \cdots > R_k > 0$ are of the form $e^{p_1r_1+\ldots+p_dr_d}$ for $p_1, \ldots, p_d \in \nats$. 
	Since $\left|q_1(\epsilon, n)R_1^n\right|> \left| \sum_{i=2}^k q_i(\epsilon, n)R_i^n \right|$ for sufficiently large $n$, whether $p_{\alpha, \beta}(\epsilon,n, \rho_1^n, \ldots, \rho_d^n) \bowtie_{\alpha, \beta} 0$ holds for sufficiently large $n$ depends only on $q_1(\epsilon, n)$. 
	Hence we can choose $\psi(\epsilon)$ to be $\lim_{n\to \infty} q_1(\epsilon) \bowtie_{\alpha, \beta} 0$, which amounts to a sign condition on the coefficients of $q_1(\epsilon, n)$. 
	
	\proofsubparagraph{Computing $T$.} Finally, we show that in Case~2, the value $T$ can be effectively computed. To this end, by repeatedly trying smaller and smaller values of $\epsilon$ first compute a rational $e>0$ such that $\Psi(e)$ (equivalently, $\psi(e)$) does not hold. 
	To be able to compute $T$ it then suffices to compute, for a particular $(\alpha, \beta)$, a value $T_{\alpha,\beta}$ such that $p_{\alpha, \beta}(\epsilon,t,e^{r_1t}, \ldots, e^{r_dt}) \bowtie_{\alpha, \beta} 0$ does not hold for all $t > T_{\alpha,\beta}$, assuming that it does not hold for sufficiently large~$t$. We can then take $T$ to be the maximum of $T_{\alpha,\beta}$ over $(\alpha,\beta) \in A\times B$.
	
	To compute $T_{\alpha, \beta}$, consider $p \coloneqq p_{\alpha, \beta}(e)$. Assuming it is not identically zero (otherwise we can choose $T_{\alpha, \beta}$ to be any positive integer), write  $p(t,e^{r_1t}, \ldots, e^{r_dt}) = \sum_{i=1}^k q_i(n)R_i^n$ where $q_i$ is not identically zero for all $i$ and $R_1 > \cdots > R_k > 0$ are of the form $e^{p_1r_1+\ldots+p_dr_d}$ for $p_1, \ldots, p_d \in \nats$. 
	Since $p_{\alpha, \beta}(\epsilon,t,e^{r_1t}, \ldots, e^{r_dt})  \bowtie_{\alpha, \beta} 0$ and hence $p(t,e^{r_1t}, \ldots, e^{r_dt})  \bowtie_{\alpha, \beta} 0$ do not hold for sufficiently large $n$, it must be the case that $q_1(n) \bowtie_{\alpha, \beta} 0$ does not hold for sufficiently large $n$. 
	Hence it remains to choose $N_{\alpha, \beta}$ large enough so that for all $n > N_{\alpha, \beta}$, $|q_1(n)R_1^n|$ dominates $\left| \sum_{i=2}^k q_i(n)R_i^n \right|$.
\end{proof}

We next show pseudo-reachability in Case~1.
\begin{lemma}
	If for every $\epsilon>0$ there exists $T_\epsilon$ such that for all $t > T_\epsilon$, $\aepsilon(t)$ intersects $S$ then $S$ is pseudo-reachable.
\end{lemma}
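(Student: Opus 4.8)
The plan is to reproduce, almost verbatim, the discrete-time Case~1 argument, with two replacements: Kronecker's theorem is replaced by its continuous-time analogue, namely that for every $T_0 \geq 0$ the forward orbit $\{R(t)x : t \geq T_0\}$ is dense in $\mathcal{T}$; and \autoref{lem::limiting-shape} is replaced by its continuous version, in which the family $T_t$ is indexed directly by $t \in \reals_{\geq 0}$ (the statement and its $o$-minimality proof carry over unchanged, and in fact become slightly simpler because there is no longer a real-to-integer passage). Fix $\epsilon > 0$; it suffices to exhibit a $\tau$ with $\PO_\epsilon(\tau) \cap S \neq \emptyset$.

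First I would split the control allowance: $\epsilon/2$ to simulate the control signal responsible for the $\frac{\epsilon}{2}\B(t)$ term of $\mathcal{A}_{\epsilon/2}(t)$, and $\epsilon/2$ to absorb the gap between the genuine starting point $x$ and a suitable point of $\mathcal{T}$. By hypothesis there is $T_1$ with $\mathcal{A}_{\epsilon/2}(t) \cap S \neq \emptyset$ for all $t > T_1$. As in the discrete case, define the localisation $\mathcal{A}_{\epsilon/2}(t)(z) := D(t)z + ct + d + \frac{\epsilon}{2}\B(t)$ for $z \in \mathcal{T}$, so that $\mathcal{A}_{\epsilon/2}(t) = \{\mathcal{A}_{\epsilon/2}(t)(z) : z \in \mathcal{T}\}$, and set $T_t := \{z \in \mathcal{T} : \mathcal{A}_{\epsilon/2}(t)(z) \cap S \neq \emptyset\}$. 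For $t > T_1$ the sets $T_t$ are non-empty, they all lie in the compact set $\mathcal{T}$, and they are defined by a semialgebraic predicate in the parameters $t, e^{r_1 t}, \ldots, e^{r_d t}$; so the continuous analogue of \autoref{lem::limiting-shape} produces a non-empty limit set $L \subseteq \mathcal{T}$, and I fix some $p \in L$.

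Next I would pick $\epsilon' > 0$ and $t_0$ with $\epsilon' D(t)\B \subseteq \frac{\epsilon}{2}\B(t)$ for all $t \geq t_0$; this is possible because, blockwise, $D(t)\B$ has radius $e^{r_j t}$ while $\B(t)$ has radius $\int_0^t e^{r_j h}\,dh$, and the ratio $e^{r_j t}/\int_0^t e^{r_j h}\,dh$ stays bounded as $t \to \infty$ (it tends to $r_j$ if $r_j > 0$ and to $0$ if $r_j \leq 0$). By part~(b) of (the continuous analogue of) \autoref{lem::limiting-shape} there is $T > \max(T_1, t_0)$ such that $p + B(\zerovec, \epsilon'/2)$ meets $T_t$ for every $t > T$; fix such a $t$ together with $p_t \in \mathcal{T}$ satisfying $\|p - p_t\| < \epsilon'/2$ and $\mathcal{A}_{\epsilon/2}(t)(p_t) \cap S \neq \emptyset$. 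By density of $\{R(t)x : t \geq T\}$ in $\mathcal{T}$ there is $\tau > T$ with $\|R(\tau)x - p\| < \epsilon'/2$, whence $\|R(\tau)x - p_\tau\| < \epsilon'$, i.e. $R(\tau)x - p_\tau \in \epsilon'\B$. Applying $D(\tau)$ and the choice of $\epsilon'$ gives $D(\tau)(R(\tau)x - p_\tau) \in \frac{\epsilon}{2}\B(\tau)$, so $D(\tau)p_\tau \in D(\tau)R(\tau)x + \frac{\epsilon}{2}\B(\tau)$. Using convexity of $\B(\tau)$ to write $\epsilon\B(\tau) = \frac{\epsilon}{2}\B(\tau) + \frac{\epsilon}{2}\B(\tau)$, we obtain
\[
\PO_\epsilon(\tau) = D(\tau)R(\tau)x + c\tau + d + \frac{\epsilon}{2}\B(\tau) + \frac{\epsilon}{2}\B(\tau) \;\supseteq\; D(\tau)p_\tau + c\tau + d + \frac{\epsilon}{2}\B(\tau) = \mathcal{A}_{\epsilon/2}(\tau)(p_\tau),
\]
which meets $S$; hence $\PO_\epsilon(\tau) \cap S \neq \emptyset$, so $S$ is $\epsilon$-pseudo-reachable, and since $\epsilon$ was arbitrary, $S$ is pseudo-reachable.

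The only steps requiring genuine (though routine) justification, and hence the main obstacles, are the continuous-time density statement ``for every $T_0$, $\{R(t)x : t \geq T_0\}$ is dense in $\mathcal{T}$'' and the continuous-time form of \autoref{lem::limiting-shape}. The first is the continuous counterpart of the ``infinitely many $n$'' clause of Kronecker's theorem: $t \mapsto R(t)x$ is a linear flow on a torus, whose forward orbit from any time is recurrent and has closure $\mathcal{T}$, and the effectivity is inherited from the computation of $\mathcal{T}$ in \autoref{sec::Kronecker} and \cite{contSkolem}. The second is obtained from \autoref{lem::limiting-shape} by replacing the integer index by a real one throughout; its proof already only uses $o$-minimality of $\realexp$. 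All remaining ingredients — the budget split, the comparison of $\B(t)$ with $D(t)\B$, and the final set inclusion — are elementary.
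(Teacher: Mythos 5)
Your proof is correct and is essentially the paper's argument: the same $\epsilon/2+\epsilon/2$ budget split, the same localisation $\aepsilon(t)(z)$, the same use of a limiting shape $L$ and a point $p\in L$, the same choice of $\epsilon'$ with $\epsilon' D(t)\B\subseteq\frac{\epsilon}{2}\B(t)$, and the same final inclusion $\PO_\epsilon(\tau)\supseteq\mathcal{A}_{\epsilon/2}(\tau)(p_\tau)$. The one real difference is how time is handled. The paper deliberately discretises: it applies \autoref{lem::limiting-shape} only to the integer-indexed family $T_n$, $n\in\nats$, and invokes the discrete Kronecker theorem to produce an integer witness $m$, so that the pseudo-orbit hits $S$ at an integer time and no new lemmas are needed. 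You stay entirely in continuous time, which obliges you to supply (i) a real-indexed version of \autoref{lem::limiting-shape} and (ii) density of every forward tail $\{R(t)x: t\geq T_0\}$ in $\T$. Both are as routine as you claim: the proof of \autoref{lem::limiting-shape} already works with the continuous family $T_t$ and uses only $o$-minimality of $\realexp$, and the tail-density statement is the minimality of a linear flow on the subtorus it generates. Your variant even buys something: since $\T$ is defined as the closure of the \emph{continuous} orbit, the paper's appeal to the discrete Kronecker theorem to approximate $p\in\T$ by integer-time points tacitly needs the integer-time orbit to be dense in that continuous closure, a point your formulation sidesteps. The only blemish is a quantifier slip in the wording: you should retain ``for all $t>T$ there exists $p_t\in T_t$ with $\|p-p_t\|<\epsilon'/2$'' and only afterwards specialise to the $\tau$ produced by density (rather than ``fixing such a $t$'' before $\tau$ is known); this is how the paper phrases it and does not affect correctness.
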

\begin{proof}
	We first define a suitable notion of localisation, in exacty the same way as the discrete case. 
	Given $z \in \torus$, let $\aepsilon(t)(z) = D(t)z + ct +d +\epsilon \B(t)$.
	
	Fix $\epsilon>0$. We show how to construct an $\epsilon$-pseudo-orbit that hits $S$.
	In fact, this $\epsilon$-pseudo-orbit will hit $S$ at an integer time step $m$. Consider~$\mathcal{A}_{\epsilon/2}(t)$. 
	By assumption, there exists $T_1$ such that for all $t > T_1$, $\mathcal{A}_{\epsilon/2}(t)$ intersects $S$. 
	We now investigate which localisations of the abstraction are responsible for intersecting $S$.
	Apply \autoref{lem::limiting-shape} to the sets $T_n = \{z \in \mathcal{T} : \mathcal{A}_{\epsilon/2}(n)(z) \textrm{ intersects } S\}$, $n \in \nats$, to obtain their `limit' $L$. Fix any $p \in L$.
	
	Let  $\epsilon'$ be small enough so that for all $n > 0$, $\epsilon' D(n)B(\zerovec, 1) \subseteq \frac{\epsilon}{2}\B(n)$.
	By \autoref{lem::limiting-shape}~(b), there exists $N>T_1$ such that for all integers $n > N$, $p+B(\zerovec, \epsilon'/2)$ intersects $T_n$.
	That is, for all $n > N$ there exists $p_n \in \T$ such that $||p - p_n|| < \epsilon'/2 $ and $p_n \in T_n$. Equivalently,
	\[
	||p - p_n|| < \epsilon'/2 \textrm{ and } \mathcal{A}_{\epsilon/2}(n)(p_n) \textrm{ intersects } S.
	\]
	
	By Kronecker's theorem there must exist integer $m > N$ such that $||R(m)x - p|| < \epsilon'/2$. 
	Hence we have $||R(m)x - p_m|| < \epsilon'$ which implies $R(m)x - p_m \in \epsilon' B(\zerovec, 1)$ and hence $D(m)(R(m)x - p_m) \in \epsilon'D(m)B(\zerovec, 1)$. 
	Since by construction of $\epsilon'$ we have $\epsilon'D(m)\B(\zerovec, 1) \subseteq \frac{\epsilon}{2}\B(m)$, it follows that $D(m)(R(m)x - p_m) \in \frac{\epsilon}{2}\B(m)$ and hence $D(m)p_m \in D(m) R(m) x+\frac{\epsilon}{2}\B(m)$. Therefore,
	\begin{align*}
		\PO_\epsilon(m) &= \left(D(m) R(m) x+\frac{\epsilon}{2}\B(m)\right)  + cm + d + \frac{\epsilon}{2}\B(m)\\
		& \supseteq D(m)p_m + cm +d + \frac{\epsilon}{2}\B(m)\\
		&= \mathcal{A}_{\epsilon/2}(m)(p_m).
	\end{align*}
	Since $\mathcal{A}_{\epsilon/2}(m)(p_m)$ intersects $S$, it then follows that $\PO_\epsilon(m)$ too must intersect $S$. 
\end{proof}

We thus have reduced the pseudo-reachability problem to that of handling Case~2:

\begin{theorem}
	The continuous-time pseudo-reachability problem reduces to the bounded-time reachability problem for continuous-time affine dynamical systems.
\end{theorem}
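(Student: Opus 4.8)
The plan is to mirror exactly the structure of the discrete-time argument from \autoref{sec::disc-decidability}, pushing it through the continuous setting and isolating precisely the one place where the classical bounded-time reachability problem is invoked. First I would recall that, by the change-of-control-set argument (the continuous analogue of \autoref{convenient-control-set}, using the rotation-invariant product control set $\B$ and the factorisation $e^{Mt} = D(t)R(t)$), pseudo-reachability is equivalent to the truth of $\forall \epsilon > 0.\, \exists t: (e^{Mt}x + ct + d + \epsilon\B(t)) \cap S \neq \emptyset$, where $\B(t) = \{z : \varphi(z,t,e^{r_1t},\ldots,e^{r_dt})\}$ is semialgebraic in $t$ and the exponentials $e^{r_jt}$, and where all data are real algebraic. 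Then I would introduce the abstraction $\aepsilon(t) = D(t)\mathcal{T} + ct + d + \epsilon\B(t)$ with $\mathcal{T} = \operatorname{cl}(\{R(t)x : t \geq 0\})$, which is semialgebraic and effectively computable by the continuous analogue of the Kronecker machinery (\cite{contSkolem}), and which satisfies $\PO_\epsilon(t) \subseteq \aepsilon(t)$ and $\aepsilon(t) = \{z : \varphi(z,\epsilon,t,e^{r_1t},\ldots,e^{r_dt})\}$ for a semialgebraic $\varphi$.

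Next I would invoke the dichotomy of \autoref{lem::cont-dichotomy}: either (Case 1) for every $\epsilon>0$ there is $T_\epsilon$ with $\aepsilon(t)\cap S\neq\emptyset$ for all $t>T_\epsilon$, or (Case 2) there are effectively computable $T$ and $\epsilon>0$ with $\aepsilon(t)\cap S=\emptyset$ for all $t>T$; moreover which case holds is decidable (this is proved by the $o$-minimality of $\realexp$ together with the dominant-term analysis, exactly as in \autoref{lem::discrete-dichotomy}). In Case 1 I would appeal to the preceding lemma, which shows (via \autoref{lem::limiting-shape}, the notion of localisation $\aepsilon(t)(z)$, and Kronecker's theorem applied to $R(m)x$ along integer times $m$) that $S$ is already pseudo-reachable; crucially, Case 1 requires no appeal to classical reachability, so it is fully decided. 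Hence the only remaining content is Case 2.

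In Case 2, since $\PO_\epsilon(t) \subseteq \aepsilon(t)$, no $\epsilon$-pseudo-orbit reaches $S$ at any time $t > T$ for the computed $\epsilon$, and the same holds a fortiori for all smaller $\epsilon$; so $S$ is pseudo-reachable if and only if for every $\epsilon>0$ there is $t \leq T$ with $(e^{Mt}x + ct + d + \epsilon\B(t)) \cap S \neq \emptyset$. Exactly as in the discrete case — using that $\B(t)$ is full-dimensional with $\zerovec$ in its interior, and a compactness argument over the compact time interval $[0,T]$ — this is equivalent to asking whether the genuine trajectory point $e^{Mt}x + ct + d$ lies in $\overline{S}$ for some $t \leq T$, i.e.\ whether the closed target $\overline{S}$ is reached within the bounded time horizon $T$ by the continuous-time affine system with trajectory $x(t) = e^{Mt}s + \int_0^t e^{Mh}b\,dh$. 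That is precisely an instance of the bounded-time reachability problem for continuous-time affine dynamical systems, so the reduction is complete.

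The main obstacle is the continuous analogue of \autoref{lem::limiting-shape} needed in the Case 1 argument: one must handle families $T_t = \{z : \varphi(z,t,e^{r_1t},\ldots,e^{r_dt})\}$ indexed by a \emph{real} parameter $t$, argue that they converge (in the two-sided Hausdorff-type sense of \autoref{lem::limiting-shape}) to a non-empty limit set $L$, and then — and this is the delicate point — extract the hitting witness at an \emph{integer} time step via Kronecker, reconciling the continuous rotation flow $R(t)$ with the discrete orbit $\{R(m)x : m \in \nats\}$. Everything else (the change of control set, the dominant-term/$o$-minimality bookkeeping in \autoref{lem::cont-dichotomy}, and the compactness argument reducing Case 2 to reaching $\overline{S}$) is a routine transcription of the discrete-time proof, with "$\rho_j^n$" replaced throughout by "$e^{r_jt}$" and sums of geometric series replaced by the corresponding integrals.
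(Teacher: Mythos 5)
Your proposal matches the paper's argument essentially step for step: invoke the continuous dichotomy lemma, conclude pseudo-reachability outright in Case~1 via the localisation/limiting-shape/Kronecker lemma, and in Case~2 reduce to checking whether the true affine trajectory reaches $\overline{S}$ within the computed bound $T$, i.e.\ an instance of bounded-time reachability. The ``delicate point'' you flag is resolved in the paper exactly as you anticipate --- \autoref{lem::limiting-shape} is applied only to the integer-indexed family $T_n$, $n\in\nats$, and the witness is extracted at an integer time via Kronecker --- so no genuinely new continuous-parameter version of that lemma is needed.
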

Intuitively, the dichotomy lemma (\autoref{lem::discrete-dichotomy}) holds verbatim for the continuous systems, and in Case~1 again $S$ is always pseudo-reachable. It then remains to handle Case~2.
Since bounded-time reachability problem for continuous-time affine dynamical systems can be encoded in $\reals_{\exp, \cos \restr [0, T]}$, we have the following (conditional) decidability result.
\begin{corollary}
	Continuous-time pseudo-reachability problem for diagonalisable affine dynamical systems is decidable subject to Schanuel's conjecture.
\end{corollary}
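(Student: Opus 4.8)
The plan is to obtain the corollary by composing the reduction established in the preceding theorem with the conditional decidability of time-bounded reachability. By that theorem, from an instance $\langle M, b, s, S \rangle$ of the continuous-time pseudo-reachability problem one effectively computes a time bound $T > 0$ such that: in Case~1 of the dichotomy, $S$ is pseudo-reachable unconditionally (this has already been proved); and in Case~2, $S$ is pseudo-reachable if and only if the unperturbed trajectory $e^{Mt}x + ct + d$ (with real algebraic data, after the closed-form reductions carried out above) meets $\overline{S}$ for some $t \in [0, T]$. Hence it remains only to show that this last, time-bounded reachability question is decidable assuming Schanuel's conjecture.

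For that step I would write the trajectory in closed form coordinatewise. Since $M$ is diagonalisable and in real Jordan form, every coordinate of $e^{Mt}x + ct + d$ is a finite combination, using the field operations and real algebraic coefficients, of $t$, the exponentials $e^{r_1 t}, \ldots, e^{r_d t}$, and the trigonometric terms $\cos(\omega_1 t), \sin(\omega_1 t), \ldots, \cos(\omega_k t), \sin(\omega_k t)$. Restricting $t$ to $[0, T]$ confines each argument $\omega_j t$ to a bounded interval $[0, |\omega_j| T]$; after rescaling the arguments by the fixed algebraic factors $\omega_j$ and expressing $\sin$ through $\cos$ (which only enlarges the required window by a constant), the statement ``$\exists t \in [0, T] : e^{Mt}x + ct + d \in S$'', with $S$ semialgebraic, becomes a first-order sentence of the structure $\reals_{\exp, \cos \restr [0, T']}$ for an effectively computable bound $T'$. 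The Macintyre--Wilkie theorem \cite{Macintyre1996, Wilkie1997} then asserts that the truth of this sentence is decidable subject to Schanuel's conjecture, which is exactly what is needed.

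The routine parts of this argument --- diagonalising $e^{Mt}$, reading off the closed form, and handing the resulting sentence to the decision procedure --- run in close parallel to the discrete case. The one step that requires genuine care, and which I expect to be the main obstacle in a fully self-contained treatment, is the faithful encoding of the time-bounded instance inside $\reals_{\exp, \cos \restr [0, T']}$: one must account for eigenvalues on the unit circle (which contribute polynomial-in-$t$ factors rather than pure exponentials), verify that all rescalings preserve the algebraicity of the parameters, and ensure that the trigonometric arguments stay within the prescribed bounded window throughout $[0, T]$. This encoding is precisely what is carried out in \cite{contSkolem}, so in the paper I would cite it rather than reprove it, after which the corollary follows immediately.
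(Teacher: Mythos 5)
Your proposal is correct and follows essentially the same route as the paper: compose the reduction to bounded-time reachability (with Case~1 handled unconditionally and Case~2 reduced to checking whether the unperturbed trajectory meets $\overline{S}$ on $[0,T]$) with the encodability of that bounded-time question in $\reals_{\exp, \cos \restr [0, T]}$ and the Macintyre--Wilkie decidability result under Schanuel's conjecture. The only minor quibble is that for a diagonalisable $M$ the matrix exponential contributes no polynomial-in-$t$ factors (those arise only from nontrivial Jordan blocks; the linear term $ct$ here comes from integrating the affine part), but this does not affect the argument.
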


\section{Discussion}


The main technical result of our paper is that it is decidable whether 
\[
\forall \epsilon > 0.\, \exists n: \, \left(M^nx + f(n) + \epsilon\B(n)\right) \cap S \neq 0
\]
where $M$ is a diagonalisable matrix with algebraic entries,
$x$ is an algebraic starting point,
$f$ is a semialgebraic function,
$S$ is a semialgebraic target and
$\B(n)=\varphi(n, \rho_1^n, \ldots, \rho_d^n)$ for $\rho_1, \ldots, \rho_d \in \reals \cap \algebraics$
and a semialgebraic function $\varphi$. 
We used this result to show decidability of the discrete-time pseudo-reachability problem for diagonalisable systems in the following way.
We first observed that the pseudo-reachability problem can be cast as the problem of determining whether $\forall \epsilon>0. \, \exists n: \PO_{\epsilon}(n) \cap S \neq \emptyset$, where $\PO_\epsilon(n)$ is the set of all points that are reachable exactly at the time $n$ via an $\epsilon$-pseudo-orbit.
After choosing $\B$ as the most convenient control set (see \autoref{convenient-control-set} and \autoref{sec::closed-form}), we then showed that $\PO_\epsilon(n)$ can be written as $M^nx + f(n) + \epsilon\B(n)$.

The reason we are unable to show decidability for non-diagonalisable systems in this fashion is that we are unable to write $\PO_{\epsilon}(n)$ as $M^nx + f(n) + \epsilon\B(n)$.
For example, already for the Jordan block $M=\begin{bmatrix}
	1 & 1\\
	0 & 1
\end{bmatrix}$, and in general already for blocks with a single repeated real eigenvalue, we do not know whether it is even possible to eliminate the summation $\sum_{i=0}^{n-1}M^iB$ and express $\PO_\epsilon(n) = M^nx + \epsilon \sum_{i=0}^{n-1}M^iB$, where $B$ is any full dimensional shape containing $\zerovec$ in its interior, in the required fashion.

Our approach, however, can be used to solve, in full generality, the robre	ust reachability problem of \cite{akshay_et_al:LIPIcs.STACS.2022.5}:
given $M, x$ and $S$, decide whether $\forall \epsilon > 0: \, \exists n: \, (M^nx + \epsilon M^nB(\zerovec,1)) \cap S \neq \emptyset$.
Intuitively, the reason is that in this version there is no summation of the form $\sum_{i=0}^{n-1}M^iB$.
Detailed proofs (for both the discrete-time and the continuous-time versions) can be found in the appendix.
For diagonalisable systems in particular, decidability of the robust reachability problem is almost immediate.
First, one can again show that the problem is equivalent to determining whether $\forall \epsilon > 0: \, \exists n: \, (M^nx + \epsilon M^n\B) \cap S \neq \emptyset$.
It then remains to observe that $M^n\B = \varphi(n, \rho_1^n, \ldots, \rho_d^n)$ for a semialgebraic predicate $\varphi$ and apply the technical result described above.

\bibliography{references.bib}
\appendix
\section{Discrete robust reachability problem}\label{sec::discrete_robust_reach}

In this section, we show the full decidability (including for non-diagonalisable systems) of the discrete-time robust reachability problem: decide, given $M \in (\reals \cap \algebraics)^{L\times L}$, a starting point $x\in\rats^L$ and a target $S\subseteq \reals^L$, whether for every $\epsilon>0$ there exists $n$ and $\delta \in B(0, \epsilon) = \epsilon B(\zerovec, 1)$ 
such that $M^n(x+\delta)\in S$. As discussed in \autoref{sec::disc-decidability}, wlog we can assume that $M$ is in real Jordan form: \[
M = \operatorname{diag}(J_1, \ldots J_k, J_{k+1}, \ldots, J_{d})
\]
where for $1\leq i \leq k$ the block $J_i$ has two non-real eigenvalues and for $k<i\leq d$ the block $J_i$ has one real eigenvalue. We denote the multiplicity and the spectral radius of $J_i$ by $\sigma_i$ and~$\rho_i$, respectively. 

As discussed in \autoref{sec::disc-decidability}, the robust reachability problem can be equivalently stated in terms of any full-dimensional set $\B$ that contains $\zerovec$ in its interior (instead of $B(\zerovec,1)$) as the ``control set''.  
That is, for any such set $\B$, the problem of deciding whether $\forall \epsilon.\,  \exists n : (M^nx + \epsilon M^n \B )\cap S \neq \emptyset$ is equivalent to the robust reachability problem.
We first give a set $\B$ that is most appropriate for our purposes. 
Intuitively, the idea is again to eliminate the rotations in $M^n$ so that $M^n\B$ can be defined in a first-order fashion using algebraic parameters.
Let $\B = \Pi_{i=1}^d \B_i$ where (i)~$\B_i = \Pi_{j=1}^{\sigma_i} B((0,0), 1)$ for $1\leq i \leq k$ and (ii) $B_i = [-1,1]^{\sigma_i}$ for $k< i \leq d$. 
Define $\B(n) = M^n\B$ 
 and observe that 
\[
\B(n) = \varphi(n, \rho_1^n, \ldots, \rho_d^n)
\]
where $\varphi$ is a semialgebraic function.
Let $\PO_\epsilon(n)=M^nx + \epsilon\B(n)$. The robust reachability problem is then equivalent to determining whether
 \[
\forall  \epsilon>0. \, \exists n:\, \PO_\epsilon(n) \cap S \ne \emptyset. 
 \]

We move onto defining the abstraction for $M^nx$. 
Assume $M$ is of the same form as above.
For $\alpha\in\torus$ let $
R(\alpha) = \begin{bmatrix}
	\operatorname{Re}(\alpha) & -\operatorname{Im}(\alpha)\\
	\operatorname{Im}(\alpha) & \operatorname{Re}(\alpha)
\end{bmatrix}
$
and for $1 \leq i \leq k$ let $\gamma_i = \lambda_i / \rho(J_i)$ for a non-real eigenvalue $\lambda_i$ of the block~$J_i$.  
Let $f:\nats \times \torus^k\to \reals^{L\times L}$ be the ``matrix builder'' function, defined as follows.
\begin{equation}\label{eq:mat_builder_discrete}
f(n, (\alpha_1,\ldots, \alpha_k)) = \operatorname{diag}(g(\alpha_1),\ldots,g(\alpha_k), J_{k+1}^n, \ldots, J_{d}^n),
\end{equation}
where
\[
g(\alpha_i) = \begin{bmatrix}
	A_i & n\Lambda_i^{-1}A_i & \cdots &\binom{n}{\sigma_i-1}\Lambda_i^{-\sigma_i+1}A_i\\
	& A_i & \ddots & \vdots\\
	&&\ddots&  n\Lambda_i^{-1}A_i\\
	&&&A_i 
\end{bmatrix}
\textrm{ and }
A_i = \rho_i^n R(\alpha).
\]
We define the matrix builder with respect to the state matrix $M$ which is given in real JNF. For example, for $\alpha, \beta \in \torus$ and the state matrix
\[
M =
\begin{bmatrix}
	\Lambda_1&I\\
	&\Lambda_1&\\
	&&\Lambda_2&I\\
	&&&\Lambda_2\\
	&&&&\rho_3&1\\
	&&&&&\rho_3
\end{bmatrix},\]
the corresponding matrix builder takes the form
\[
f(n, (\alpha,\beta)) =
\begin{bmatrix}
	A&n\Lambda_1^{-1}A\\
	&A&\\
	&&B&n\Lambda_2^{-1}B\\
	&&&B\\
	&&&&\rho_3^n&n\\
	&&&&&\rho_3^n
\end{bmatrix}.
\]
Here $A = \rho(\Lambda_1)^nR(\alpha)$
and 
 $B =  \rho(\Lambda_2)^nR(\beta)$.
Let
\[
\T = \operatorname{cl}(\{(\gamma_1^n,\ldots, \gamma_k^n) : n \in \nats\}).
\]
The set $\T$ is semialgebraic and effectively computable. Further define
\[
\aepsilon(n)(z)= f(n,z) x + \epsilon\B(n) \textrm{ and }  \aepsilon(n)=\{\aepsilon(n)(z) : z \in \T\}.
\]
Then $M^nx = f(n, (\gamma_1^n,\ldots, \gamma_k^n) )x$ is abstracted by $\{f(n, z)x : z \in \torus^k\}$ and $\aepsilon(n) \supseteq \PO_\epsilon(n)$.

The following lemma encapsulates all the nasty differences between the diagonalisable and the non-diagonalisable case. Its proof is an easy manipulation of matrices.
\begin{lemma}
	Given $z = (\alpha_1, \ldots, \alpha_k)$, a time step $n$, an update matrix $M$ and a starting point $x$,
	\[
	f(n,(\alpha_1, \ldots, \alpha_k))x = M^n x + M^n \Delta
	\] 
	has a solution
	\[
	\Delta = (\Delta_1, \ldots, \Delta_k, \zerovec, \ldots, \zerovec)
	\]
	where $\Delta_i(j) = R(\gamma_i^{-n})(R(\alpha_i)-R(\gamma_i^n)) x_i(j)$ for $1\leq i \leq k$ and $1\leq j\leq \sigma_i$. 
\end{lemma}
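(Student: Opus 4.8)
The plan is to verify the claimed identity one Jordan block at a time, using that $M$, $M^n$ and $f(n,z)$ are all block-diagonal with the same block structure, and to reduce the whole computation to a single fact about rotations: the map $R$ is multiplicative on the unit circle, $R(\zeta)R(\zeta')=R(\zeta\zeta')$ for $\zeta,\zeta'\in\torus$, so that every $R(\zeta)$ commutes with $\Lambda_i=\rho_iR(\gamma_i)$ and hence with every integer power $\Lambda_i^m$.

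First I would fix the block decomposition $x=(x_1,\dots,x_k,x_{k+1},\dots,x_d)$, writing each complex-block component as $x_i=(x_i(1),\dots,x_i(\sigma_i))$ with $x_i(j)\in\reals^2$. Since $f(n,z)$ and $M^n$ literally coincide on the last $d-k$ blocks (both equal $J_i^n$ there), the equation $f(n,z)x=M^nx+M^n\Delta$ is satisfied on those blocks exactly when the corresponding components of $\Delta$ are $\zerovec$; this already fixes the tail of $\Delta$ and matches the stated form $\Delta=(\Delta_1,\dots,\Delta_k,\zerovec,\dots,\zerovec)$. It then remains, for each complex block $1\le i\le k$, to exhibit $\Delta_i$ with $g(\alpha_i)x_i=J_i^nx_i+J_i^n\Delta_i$.

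For a fixed complex block the key step is the matrix identity $g(\alpha_i)=J_i^nP_i$, where $P_i$ is the block-diagonal matrix with $R(\gamma_i^{-n}\alpha_i)$ in each of its $\sigma_i$ diagonal $2\times2$ slots. I would first recall the standard formula for powers of a real Jordan block: writing $J_i$ as $\Lambda_i$ repeated down the block diagonal plus the nilpotent block-superdiagonal shift $N$ of $2\times2$ identity matrices (so $N^{\sigma_i}=0$ and $N$ commutes with the $\Lambda_i$-part since all its blocks are equal), the binomial theorem gives $J_i^n$ block-upper-triangular with $\Lambda_i^n$ on the diagonal and $\binom{n}{j}\Lambda_i^{n-j}$ on the $j$-th block-superdiagonal. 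Setting $\zeta_i\coloneqq\gamma_i^{-n}\alpha_i$, multiplicativity of $R$ yields $A_i=\rho_i^nR(\alpha_i)=\Lambda_i^nR(\zeta_i)$ and therefore $\Lambda_i^{-j}A_i=\Lambda_i^{n-j}R(\zeta_i)$ for $0\le j\le\sigma_i-1$; since $R(\zeta_i)$ commutes with every $\Lambda_i^m$, right-multiplying $J_i^n$ by $P_i$ simply multiplies every block entry on the right by $R(\zeta_i)$, and comparing entry by entry with the displayed shape of $g(\alpha_i)$ establishes $g(\alpha_i)=J_i^nP_i$. Consequently $g(\alpha_i)x_i-J_i^nx_i=J_i^n(P_i-I)x_i$, so one may take $\Delta_i\coloneqq(P_i-I)x_i$, i.e.\ $\Delta_i(j)=(R(\zeta_i)-I)x_i(j)$; a last application of multiplicativity, $R(\zeta_i)-I=R(\gamma_i^{-n}\alpha_i)-R(\gamma_i^{-n}\gamma_i^n)=R(\gamma_i^{-n})\bigl(R(\alpha_i)-R(\gamma_i^n)\bigr)$, rewrites this in exactly the form asserted in the statement. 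Reassembling the blocks gives $f(n,z)x=M^nx+M^n\Delta$ with the required $\Delta$.

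The only genuinely fiddly point is the bookkeeping around the binomial Jordan-power formula, namely checking that right-multiplication by $P_i$ reproduces precisely the block-superdiagonal entries $\binom{n}{j}\Lambda_i^{-j}A_i$ of $g(\alpha_i)$ rather than something off by a power of $\Lambda_i$; everything else is an immediate consequence of $R(\zeta)R(\zeta')=R(\zeta\zeta')$, which is why the lemma deserves its description as an easy manipulation of matrices.
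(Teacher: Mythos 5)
Your proof is correct. The paper omits the argument entirely (describing it only as ``an easy manipulation of matrices''), and your block-by-block verification is exactly the intended computation: the real blocks force the zero tail of $\Delta$, the identity $g(\alpha_i)=J_i^nP_i$ with $P_i$ block-diagonal with copies of $R(\gamma_i^{-n}\alpha_i)$ follows from the binomial expansion of $J_i^n$ together with the multiplicativity of $R$ (which gives $A_i=\Lambda_i^nR(\gamma_i^{-n}\alpha_i)$ and lets $R(\gamma_i^{-n}\alpha_i)$ commute past the powers of $\Lambda_i$), and the final rewriting $R(\gamma_i^{-n}\alpha_i)-I=R(\gamma_i^{-n})\bigl(R(\alpha_i)-R(\gamma_i^n)\bigr)$ yields precisely the stated $\Delta_i(j)$.
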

Observe that, assuming $k$ and $x$ are fixed, $||\Delta|| = O(||z-\Gamma^n||)$, 
 where $\Gamma^n = (\gamma_1^n, \ldots, \gamma_k^n)$. Hence we obtain the following corollary, which intuitively states that if $z$ is close to $\Gamma^n$, then $f(n,z)x$ is close to the true point $M^nx$, in the sense that $f(n,z)x$ can be reached from $x$ by first jumping to a point $x'$ that is at most $\epsilon\B$ away and then applying $M$ exactly $n$ times.
\begin{corollary}
	Given $M$ and $x$, there exists $C$ such that for all $\epsilon>0$ and $z\in \torus^k$,
	\[
	||\Gamma^n-z|| < C\epsilon \implies \exists \Delta \in \epsilon\B: \, f(n,z)x = M^n x + M^n \Delta.
	\]
\end{corollary}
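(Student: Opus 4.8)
The plan is to reduce the statement to the explicit formula for $\Delta$ furnished by the preceding lemma and then carry out a routine blockwise matrix-norm estimate. Recall that the lemma gives $f(n,z)x = M^nx + M^n\Delta$ with $\Delta = (\Delta_1, \ldots, \Delta_k, \zerovec, \ldots, \zerovec)$, where $\Delta_i(j) = R(\gamma_i^{-n})\bigl(R(\alpha_i) - R(\gamma_i^n)\bigr)x_i(j)$ for $1 \le i \le k$ and $1 \le j \le \sigma_i$. Thus the only thing left to prove is that, once $\norm{\Gamma^n - z}$ is small enough (linearly in $\epsilon$, with a constant depending only on $M$ and $x$), this particular $\Delta$ lies in $\epsilon\B$.

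First I would bound each $\Delta_i(j)$. For any $\alpha \in \torus$ the matrix $R(\alpha)$ is an orthogonal $2 \times 2$ rotation, so $\norm{R(\alpha)}_2 = 1$; in particular $R(\gamma_i^{-n})$ is non-expansive. Writing $R(\alpha) - R(\beta)$ entrywise, its Frobenius norm equals $\sqrt{2}\,\abs{\alpha - \beta}$, whence $\norm{R(\alpha_i) - R(\gamma_i^n)}_2 \le \sqrt{2}\,\abs{\alpha_i - \gamma_i^n} \le \sqrt{2}\,\norm{\Gamma^n - z}$ (using on $\torus^k \subseteq \complexes^k$ the Euclidean norm, so that each coordinate difference is dominated by the norm of the whole tuple). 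By submultiplicativity, $\norm{\Delta_i(j)} \le \sqrt{2}\,\norm{x_i(j)}\,\norm{\Gamma^n - z}$ for every such $i,j$.

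Now set $K := \sqrt{2}\,\max_{i,j}\norm{x_i(j)}$, taking $K := 1$ in the degenerate case where this maximum is $0$ (in which case $\Delta = \zerovec$ regardless), and put $C := 1/K$. If $\norm{\Gamma^n - z} < C\epsilon$, then $\norm{\Delta_i(j)} < \epsilon$ for every non-real block $i$ and coordinate $j$, so $\Delta_i(j) \in \epsilon\,B((0,0),1)$; and the components of $\Delta$ belonging to the real blocks are $\zerovec \in \epsilon[-1,1]^{\sigma_i}$. Hence $\Delta \in \epsilon\prod_{i=1}^d \B_i = \epsilon\B$, and by the preceding lemma $f(n,z)x = M^nx + M^n\Delta$ with $\Delta \in \epsilon\B$, as required.

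There is no genuine obstacle here: the corollary is an immediate quantitative consequence of the preceding lemma together with the fact that rotation matrices are isometries. The only points requiring a little care are fixing the norm convention on $\torus^k$ so that $\abs{\alpha_i - \gamma_i^n} \le \norm{\Gamma^n - z}$ really holds, handling the degenerate case $K = 0$, and noting that the real-block components of $\Delta$ vanish and hence trivially lie in the corresponding factor of $\epsilon\B$.
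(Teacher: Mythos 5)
Your proof is correct and matches the paper's (essentially one-line) justification: the paper simply observes that the explicit $\Delta$ from the preceding lemma satisfies $\|\Delta\| = O(\|z-\Gamma^n\|)$ and derives the corollary from that, while you fill in the same estimate blockwise via the isometry of $R(\gamma_i^{-n})$ and the bound $\|R(\alpha_i)-R(\gamma_i^n)\| \le \sqrt{2}\,|\alpha_i-\gamma_i^n|$, with the constant uniform in $n$ as required.
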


We now move onto proving decidability of the robust reachability problems. Firstly, the dichotomy lemma and its proof hold verbatim.	
\begin{lemma}
	Either
	\begin{enumerate}
		\item for every $\epsilon > 0$ there exists $N_\epsilon$ such that for all $n > N_\epsilon$, $\aepsilon(n)$ intersects $S$, or
		\item there exist computable $N$ and $\epsilon>0$ such that $\aepsilon(n)$ does not intersect $S$ for all $n > N$. 
	\end{enumerate}
	Moreover, it can be effectively determined which case holds.
\end{lemma}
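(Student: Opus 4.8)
The plan is to observe that this lemma, together with its proof, is simply \autoref{lem::discrete-dichotomy} transplanted to a setting where the abstraction $\aepsilon$ has the very same definability structure, so that essentially nothing new is needed. The one fact about $\aepsilon$ that the proof of \autoref{lem::discrete-dichotomy} actually uses is that $\aepsilon(n)$ is cut out by a semialgebraic predicate whose parameters lie in $\{\epsilon, n, \rho_1^n, \ldots, \rho_d^n\}$; consequently $\aepsilon(n) \cap S \neq \emptyset$ becomes equivalent to a quantifier-free formula
\[
\Phi(\epsilon, n) \;=\; \bigvee_{\alpha \in A} \bigwedge_{\beta \in B} p_{\alpha,\beta}(\epsilon, n, \rho_1^n, \ldots, \rho_d^n) \bowtie_{\alpha,\beta} 0 .
\]
So the first step is to verify this property for the present $\aepsilon$.

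For that I would unwind $\aepsilon(n) = \{f(n,z)x + \epsilon\B(n) : z \in \T\}$. We already know $\B(n) = \varphi(n, \rho_1^n, \ldots, \rho_d^n)$ for a semialgebraic $\varphi$, and that $\T \subseteq \torus^k$ is semialgebraic and effectively computable. Reading off the matrix builder $f$ of \eqref{eq:mat_builder_discrete}, every entry of $f(n,z)$ is a polynomial in $n$, in $\rho_1^n, \ldots, \rho_d^n$, and in the entries of $R(\alpha_1), \ldots, R(\alpha_k)$ --- the diagonal $g(\alpha_i)$ blocks contributing $\binom{n}{j}\rho_i^{n-j}R(\alpha_i)$ terms and the $J_j^n$ blocks for $k < j \le d$ contributing $\binom{n}{j}\rho_j^{n-j}$ terms --- so $f(n,z)x$ is a semialgebraic function of $(n, \rho_1^n, \ldots, \rho_d^n, z)$. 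Existentially quantifying over $z \in \T$ and over the point of $\aepsilon(n)$ required to lie in $S$, and eliminating these quantifiers by Tarski--Seidenberg, produces $\Phi(\epsilon, n)$ of the displayed form. This is the step in which the (possibly non-trivial) Jordan structure appears, through the polynomials $\binom{n}{j}$ in $n$; but it is completely insensitive to whether $M$ is diagonalisable, which is precisely why the remainder of the argument carries over verbatim.

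Once $\Phi(\epsilon,n)$ is in hand, the three remaining steps are literally those of the proof of \autoref{lem::discrete-dichotomy}. For the dichotomy itself: if Case~1 fails, fix a witnessing $\epsilon$, treat $n$ as a real parameter, and use $o$-minimality of $\realexp$ to see that $\{n \ge 0 : \neg\Phi(\epsilon,n)\}$ is a finite union of intervals containing arbitrarily large integers, hence contains a tail $(N,\infty)$. For effectiveness: show $\Psi(\epsilon) \coloneqq \exists N_\epsilon\, \forall n > N_\epsilon : \Phi(\epsilon,n)$ is equivalent to a formula $\psi(\epsilon)$ of $\reals_0$ --- distribute the ``for all sufficiently large~$n$'' quantifier over the Boolean structure, write each relevant atom as $\sum_i q_i(\epsilon,n)R_i^n$ with algebraic $R_1 > \cdots > R_k > 0$, and replace it by the sign condition on the dominant coefficient $q_1(\epsilon,n)$ --- and then decide $\forall \epsilon : \psi(\epsilon)$ by quantifier elimination in $\reals_0$. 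For computing $N$ in Case~2: search for a rational $e$ with $\neg\psi(e)$, and for each $(\alpha,\beta)$ determine a threshold beyond which $|q_1(e,n)R_1^n|$ dominates $|\sum_{i \ge 2} q_i(e,n)R_i^n|$, taking $N$ to be the maximum. The only part requiring genuine --- though entirely routine --- work is the semialgebraicity check of the previous paragraph; I expect no real obstacle there, the substantive content of the appendix residing instead in the later lemmas that reconstruct an $\epsilon$-perturbed orbit in Case~1.
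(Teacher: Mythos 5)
Your proposal is correct and matches the paper's approach: the paper simply asserts that ``the dichotomy lemma and its proof hold verbatim'' from \autoref{lem::discrete-dichotomy}, and your argument is exactly that transplantation, with the added (and worthwhile) care of explicitly checking that the matrix-builder abstraction $\aepsilon(n)$ is still semialgebraic in the parameters $\epsilon, n, \rho_1^n, \ldots, \rho_d^n$ despite the non-trivial Jordan structure. Nothing further is needed.
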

If Case~1 holds, then $S$ is robust reachable if and only if it is reachable within the first $N$ steps. We will show that in Case~2 $S$ is robust reachable. This will conclude the proof. 
\begin{lemma}
	If for all $\epsilon>0$ there exists $N_\epsilon$ such that for all $n>N_\epsilon$, $\aepsilon(n)$ intersects $S$ then $S$ is robust reachable.
\end{lemma}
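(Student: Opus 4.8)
The plan is to mirror the Case~1 argument from the diagonalisable setting (the lemma that Case~1 implies pseudo-reachability), now using the matrix-builder $f$ together with the preceding corollary in place of the clean factorisation $M^n = D^nR^n$. Fix $\epsilon>0$; the goal is to produce an integer $m$ and a perturbation $\delta\in\epsilon\B$ with $M^m(x+\delta)\in S$, i.e.\ $\PO_\epsilon(m)\cap S\neq\emptyset$, recalling that $\PO_\epsilon(n)=M^nx+\epsilon\B(n)=M^n(x+\epsilon\B)$ since $\B(n)=M^n\B$.

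First I would apply the hypothesis with parameter $\epsilon/2$: there is $N_1$ with $\mathcal{A}_{\epsilon/2}(n)$ intersecting $S$ for all $n>N_1$. Define the localisations $T_n=\{z\in\T:\mathcal{A}_{\epsilon/2}(n)(z)\textrm{ intersects }S\}$. For $n>N_1$ these are non-empty, they lie in the compact set $\T\subseteq\torus^k$, and they have the form $T_n=\{z:\varphi(z,n,\rho_1^n,\ldots,\rho_d^n)\}$ for a semialgebraic $\varphi$, because the entries of $f(n,z)$ are polynomials in $n$ and the $\rho_i^n$ multiplied by rotation-type quantities in the coordinates of $z$, and $\B(n)$ is semialgebraic in $(n,\rho_1^n,\ldots,\rho_d^n)$. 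Hence \autoref{lem::limiting-shape} applies to the tail $(T_n)_{n>N_1}$ and yields a non-empty limit set $L\subseteq\T$; fix any $p\in L$.

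Let $C$ be the constant from the corollary above (depending only on $M$ and $x$). By \autoref{lem::limiting-shape}(b) there is $N>N_1$ such that for every integer $n>N$ there is $p_n\in T_n$ with $\|p-p_n\|<C\epsilon/4$, and by Kronecker's theorem there is an integer $m>N$ with $\|\Gamma^m-p\|<C\epsilon/4$, so $\|\Gamma^m-p_m\|<C\epsilon/2$. By the corollary there is $\Delta\in(\epsilon/2)\B$ with $f(m,p_m)x=M^mx+M^m\Delta$. Since $\B$ is convex and symmetric (a product of balls and cubes centred at the origin), $(\epsilon/2)\B+(\epsilon/2)\B=\epsilon\B$, and therefore
\[
\mathcal{A}_{\epsilon/2}(m)(p_m)=f(m,p_m)x+\tfrac{\epsilon}{2}\B(m)=M^m\!\left(x+\Delta+\tfrac{\epsilon}{2}\B\right)\subseteq M^m(x+\epsilon\B)=\PO_\epsilon(m).
\]
As $p_m\in T_m$ gives $\mathcal{A}_{\epsilon/2}(m)(p_m)\cap S\neq\emptyset$, we conclude $\PO_\epsilon(m)\cap S\neq\emptyset$, so $S$ is $\epsilon$-robustly reachable; since $\epsilon>0$ was arbitrary, $S$ is robustly reachable.

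The main obstacle is essentially already discharged by the corollary and \autoref{lem::limiting-shape}: the delicate point is that $f(n,z)x$ differs from the genuine point $M^nx$ by an error $M^n\Delta$ whose $M^n$-preimage $\Delta$ is $O(\|\Gamma^n-z\|)$ \emph{uniformly in $n$}, despite the unbounded polynomial factors $\binom{n}{j}$ occurring in $f$, and the corollary packages exactly this fact. Granting it, what remains is only careful bookkeeping of constants so that the abstraction error plus the control budget fits inside $\epsilon\B$, together with the routine check that the $T_n$ have the syntactic form required by \autoref{lem::limiting-shape}.
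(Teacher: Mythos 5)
Your proposal is correct and follows essentially the same route as the paper's proof: split the budget as $\epsilon/2$ for the abstraction's control term and $\epsilon/2$ to absorb the localisation error via the corollary, then combine the limiting-shape lemma with Kronecker's theorem to find a time $m$ at which $\Gamma^m$ is close enough to a point $p_m$ whose localisation meets $S$. The only cosmetic difference is that you track the corollary's constant $C$ explicitly (using radii $C\epsilon/4$) where the paper introduces an auxiliary $\epsilon'$.
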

\begin{proof}
	Let $\epsilon>0$. 
	We show that $S$ is ``$\epsilon$-robust-reachable''. 
	That is, $M^nx + \epsilon M^n \B$ intersects~$S$ for some $n$.
	Consider $\mathcal{A}_{\epsilon/2}$.
	By assumption, there exists $N_1$ such that for all $n > N_1$, $\mathcal{A}_{\epsilon/2}(n)$ intersects $S$.  
	Let $\epsilon'$ be sufficiently small such that for all $z$
	\[
	||\Gamma^n-z|| < \epsilon' \implies \exists\Delta \in \frac{\epsilon}{2}\B: \, f(n,z)x = M^n x + M^n \Delta.
	\]
	Consider the limiting shape $L$ for the sequence
	\[
	T_n = \{z \in \mathcal{T} : \mathcal{A}_{\epsilon/2}(n)(z) \textrm{ intersects } S\}.
	\] 
	By \autoref{lem::limiting-shape}~(b), there exists $N>N_1$ such that for all $n > N$, $p+B(\zerovec, \epsilon'/2)$ intersects~$T_n$.
	That is, for all $n > N$ there exists $p_n \in \T$ such that $||p - p_n|| < \epsilon'/2 $ and $p_n \in T_n$. Equivalently,
	\[
	||p - p_n|| < \epsilon'/2 \textrm{ and } \mathcal{A}_{\epsilon/2}(n)(p_n) \textrm{ intersects } S.
	\]
	By Kronecker's theorem there must exist $m > N$ such that $||\Gamma^m - p|| < \epsilon'/2$. Hence we have
	\[
	||\Gamma^m-p_m||\leq \epsilon'
	\textrm{ and }
	\mathcal{A}_{\epsilon/2}(m)(p_m) \textrm{ intersects } S.
	\]
	By the construction of $\epsilon'$ there exists $\Delta \in \frac{\epsilon}{2}\B$ such that $f(n, p_m)x = M^nx + M^n\Delta$.
	Hence
	\[
	M^nx + \epsilon M^n \B = M^nx + \frac{\epsilon}{2}M^n\B + \frac{\epsilon}{2}M^n\B \supseteq f(n,p_m)x+\frac{\epsilon}{2}\B = \mathcal{A}_{\epsilon/2}(m)(p_m).
	\]
	Since $\mathcal{A}_{\epsilon/2}(m)(p_m)$ intersects $S$, it follows that $	M^nx + \epsilon M^n \B$ intersects $S$ too.
\end{proof}
\section{Continuous-time robust reachability problem}\label{sec::cont_robust_reach}
In this section, we take the continuous robust reachability problem for linear dynamical systems and show that it can be reduced into the bounded-time reachability problem for semialgebraic target sets. The techniques we use are very similar to the discrete setting. The only major difference is that deciding the continuous-time robust reachability problem, requires Schanuel's conjecture. 

Let $M\in (\reals\cap\algebraics)^{L\times L}$ be a matrix in real JNF, $x\in\rats^L$ be a starting point, and $S\subseteq\reals^L$ be a semialgebraic target set. We want to show how to decide whether $\forall \epsilon>0.\, \exists t\in\reals_{\geq 0}, \delta\in B(\boldsymbol{0},\epsilon)\colon e^{Mt}(x+\delta) \in S$.

Let $\B$ be the control set as defined for the discrete setting. This gives $\B(t)=e^{Mt}\B$ and the $\epsilon$-pseudo-orbit of the continuous system at time $t\in\reals_{\geq 0}$ can be defined as $\PO_\epsilon(t)=e^{Mt}x+\B(t)$. We can state the continuous robust reachability Problem as 
\[
\forall \epsilon>0.\;\exists t\in\reals_{\geq0}\colon\PO_{\epsilon}(t)\cap S\neq \emptyset
\]

One can define an abstraction for $e^{Mt}x$ similar to the discrete case. 
In particular, for $\alpha\in\reals\cap\algebraics$ let $
R(\alpha) = \begin{bmatrix}
	\cos(\alpha) &-\sin(\alpha)\\
	\sin(\alpha)& \cos(\alpha)
\end{bmatrix}
$
and for $1 \leq i \leq k$ let $\gamma_i =\operatorname{Im}(\lambda_i)$ and $r_i =\operatorname{Re}(\lambda_i)$
for a non-real eigenvalue $\lambda_i$ of the block~$J_i$.
Finally, let $f:\reals_{\geq 0} \times\reals^k\to \reals^{L\times L}$ be the ``matrix builder'' function, defined as follows.
\[
f(t, (\alpha_1,\ldots, \alpha_k)) = \operatorname{diag}(g(\alpha_1),\ldots,g(\alpha_k), e^{J_{k+1}t}, \ldots, e^{J_{d}t})
\] 
where
\[
g(\alpha_i) = \begin{bmatrix}
	A_i & tA_i & \cdots &\frac{t^{\sigma_i-1}}{(\sigma_i-1)!}A_i\\
	& A_i & \ddots & \vdots\\
	&&\ddots &  tA_i\\
	&&&A_i 
\end{bmatrix}
\textrm{ and }
A_i = e^{r_it} R(\alpha).
\]
Let
\[
\T = \{(R(\gamma_1t),\ldots, R(\gamma_kt)) : t \in \reals_{\geq 0}\}.
\]
The set $\T$ is semialgebraic and effectively computable. Further define
\[
\aepsilon(t)(z)= f(t,z) x + \epsilon\B(t) \textrm{ and }  \aepsilon(t)=\{\aepsilon(t)(z) : z \in \T\}.
\]
Notice that we have $\aepsilon(n) \supseteq \PO_\epsilon(n)$. The next lemma shows that if $z$ is picked close enough to $\Gamma(t) = (\gamma_1t, \ldots, \gamma_kt)$, then $e^{Mt}x$ can be approximated by $f(t,z)x$.
\begin{lemma}
	Given $z = (\alpha_1, \ldots, \alpha_k)$, a time point $t\in\reals_{\geq 0}$, an update matrix $M$ and a starting point $x$,
	\[
	f(t,(\alpha_1, \ldots, \alpha_k))x = e^{Mt} x + e^{Mt} \Delta
	\] 
	has a solution
	\[
	\Delta = (\Delta_1, \ldots, \Delta_k, \zerovec, \ldots, \zerovec)
	\]
	where $\Delta_i(j) = R(-\gamma_it)(R(\alpha_i)-R(\gamma_it)) x_i(j)$ for $1\leq i \leq k$ and $1\leq j\leq \sigma_i$. 
\end{lemma}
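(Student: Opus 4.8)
The plan is to verify the identity block by block, using that both $e^{Mt}$ and $f(t,z)$ respect the real Jordan block structure of $M$. For the real blocks $k < i \le d$ the matrix builder leaves $e^{J_i t}$ untouched, so the corresponding component of $\Delta$ is $\zerovec$ and nothing needs checking there. It therefore suffices to treat a single complex Jordan block $J_i$ of multiplicity $\sigma_i$, whose real JNF has the $2\times 2$ rotation--scaling block $\Lambda_i$ on the diagonal (with $r_i = \operatorname{Re}(\lambda_i)$, $\gamma_i = \operatorname{Im}(\lambda_i)$) and $I_2$ on the super-diagonal.

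First I would compute $e^{J_i t}$ in closed form. Writing $J_i = (I_{\sigma_i}\otimes \Lambda_i) + (N\otimes I_2)$ with $N$ the nilpotent shift of size $\sigma_i$, the two summands commute, so $e^{J_i t} = (e^{Nt}\otimes I_2)(I_{\sigma_i}\otimes e^{\Lambda_i t}) = e^{Nt}\otimes e^{\Lambda_i t}$. Since $e^{\Lambda_i t} = e^{r_i t}R(\gamma_i t)$ and $e^{Nt}$ is upper-triangular with $\frac{t^{q-p}}{(q-p)!}$ in block position $(p,q)$ for $q\ge p$, the $(p,q)$ block of $e^{J_i t}$ is exactly $\frac{t^{q-p}}{(q-p)!}e^{r_i t}R(\gamma_i t)$. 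Comparing with $g(\alpha_i)$, whose $(p,q)$ block is $\frac{t^{q-p}}{(q-p)!}A_i = \frac{t^{q-p}}{(q-p)!}e^{r_i t}R(\alpha_i)$, we see that $g(\alpha_i)$ is obtained from $e^{J_i t}$ by replacing each $R(\gamma_i t)$ by $R(\alpha_i)$; equivalently, since rotation matrices commute and $R(\gamma_i t)^{-1} = R(-\gamma_i t)$,
\[
g(\alpha_i) = e^{J_i t}\bigl(I_{\sigma_i}\otimes R(-\gamma_i t)R(\alpha_i)\bigr).
\]

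From here the lemma follows by bare algebra: multiplying on the right by the block vector $x_i = (x_i(1),\ldots,x_i(\sigma_i))$ gives $g(\alpha_i)x_i = e^{J_i t}(x_i + \Delta_i)$ with $\Delta_i(j) = \bigl(R(-\gamma_i t)R(\alpha_i) - I_2\bigr)x_i(j) = R(-\gamma_i t)\bigl(R(\alpha_i) - R(\gamma_i t)\bigr)x_i(j)$, which is the claimed formula; reassembling the blocks yields $f(t,z)x = e^{Mt}x + e^{Mt}\Delta$ with $\Delta = (\Delta_1,\ldots,\Delta_k,\zerovec,\ldots,\zerovec)$. The only point requiring any care — and the closest thing to an obstacle — is the Kronecker-product bookkeeping: checking that $N\otimes I_2$ and $I_{\sigma_i}\otimes\Lambda_i$ commute (so $e^{J_i t}$ splits as that product) and that the product has exactly the block-Toeplitz shape matching $g$. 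Everything else is routine; as with the discrete analogue, this is an easy manipulation of matrices.
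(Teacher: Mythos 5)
Your proof is correct: the identity $g(\alpha_i)=e^{J_it}\bigl(I_{\sigma_i}\otimes R(-\gamma_it)R(\alpha_i)\bigr)$ obtained from $e^{J_it}=e^{Nt}\otimes e^{r_it}R(\gamma_it)$ immediately yields the claimed $\Delta_i(j)=R(-\gamma_it)\bigl(R(\alpha_i)-R(\gamma_it)\bigr)x_i(j)$, and the real blocks contribute $\zerovec$ as you say. The paper omits the proof entirely (its discrete analogue is dismissed as ``an easy manipulation of matrices''), and your blockwise Kronecker-product computation is exactly the intended argument, so there is nothing to compare beyond noting that you have supplied the details the authors left out.
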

We have the following corollary, similar to the discrete setting.
\begin{corollary}
	Given $M$ and $x$, there exists $C$ such that for all $\epsilon>0$ and $z\in (\reals\cap\algebraics)^k$,
	\[
	||\Gamma(t)-z|| < C\epsilon \implies \exists \Delta \in \epsilon\B: \, f(t,z)x = e^{Mt} x + e^{Mt} \Delta.
	\]
\end{corollary}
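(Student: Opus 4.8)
The plan is to read the corollary off the immediately preceding lemma by bounding the norm of the explicit perturbation $\Delta$ it produces. Given $z=(\alpha_1,\ldots,\alpha_k)\in(\reals\cap\algebraics)^k$ and $t\ge 0$, the lemma hands us $\Delta=(\Delta_1,\ldots,\Delta_k,\zerovec,\ldots,\zerovec)$ with $f(t,z)x=e^{Mt}x+e^{Mt}\Delta$ and $\Delta_i(j)=R(-\gamma_i t)\bigl(R(\alpha_i)-R(\gamma_i t)\bigr)x_i(j)$. Since each $R(-\gamma_i t)$ is an orthogonal $2\times 2$ matrix (an $\ell_2$-isometry), we have $||\Delta_i(j)||=||\bigl(R(\alpha_i)-R(\gamma_i t)\bigr)x_i(j)||\le ||R(\alpha_i)-R(\gamma_i t)||\cdot ||x_i(j)||$, so the whole estimate reduces to controlling the operator norm of a difference of two planar rotation matrices.

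The crux is the elementary bound $||R(\theta)-R(\phi)||\le|\theta-\phi|$. To see it, write $R(\theta)-R(\phi)=R(\phi)\bigl(R(\theta-\phi)-I\bigr)$; as $R(\phi)$ is orthogonal, $||R(\theta)-R(\phi)||=||R(\psi)-I||$ with $\psi=\theta-\phi$. The matrix $R(\psi)-I=\begin{bmatrix}\cos\psi-1 & -\sin\psi\\ \sin\psi & \cos\psi-1\end{bmatrix}$ is of the form $\begin{bmatrix}a&-b\\ b&a\end{bmatrix}$, hence a scalar multiple of a rotation, with operator norm $\sqrt{a^2+b^2}=\sqrt{2-2\cos\psi}=2|\sin(\psi/2)|\le|\psi|$. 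Therefore $||R(\alpha_i)-R(\gamma_i t)||\le|\alpha_i-\gamma_i t|\le||\Gamma(t)-z||$, and consequently $||\Delta_i(j)||\le ||x_i(j)||\cdot||\Gamma(t)-z||\le ||x||\cdot||\Gamma(t)-z||$ for every $1\le i\le k$ and $1\le j\le\sigma_i$.

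It then remains only to match this blockwise estimate against the shape of the control set. Recall $\B=\prod_{i=1}^d\B_i$, where for $i\le k$ the factor $\B_i$ is a product of unit $\ell_2$-discs and for $i>k$ it is a cube $[-1,1]^{\sigma_i}$; hence a vector lies in $\epsilon\B$ as soon as each of its $2$-dimensional sub-blocks has $\ell_2$-norm at most $\epsilon$ and its components beyond the $k$-th block vanish, which is exactly the structure of our $\Delta$. Taking $C:=1/||x||$ (the case $x=\zerovec$ being trivial, with $\Delta=\zerovec$), the hypothesis $||\Gamma(t)-z||<C\epsilon$ forces $||\Delta_i(j)||<\epsilon$ for all $i,j$, so $\Delta\in\epsilon\B$, which is the assertion.

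I do not expect a genuine obstacle: this is the continuous twin of the discrete corollary, and the argument is pure linear algebra once the preceding lemma is granted. The only two points needing a little care are (i) that the product structure of the chosen $\B$ is precisely what absorbs the blockwise $\Delta$, and (ii) the \emph{uniformity} of the constant $C$ in $t$, $\epsilon$ and $z$ — this works out because the lemma's $\Delta$ is assembled purely from rotations, so the scaling factors $e^{r_it}$ and the polynomial-in-$t$ entries of the matrix builder cancel and never enter the bound.
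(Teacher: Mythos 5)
Your proposal is correct and follows essentially the same route as the paper, which merely observes that $\|\Delta\| = O(\|\Gamma(t)-z\|)$ for the explicit $\Delta$ of the preceding lemma and lets the product structure of $\B$ absorb the blockwise bound. You simply make explicit the elementary estimate $\|R(\theta)-R(\phi)\| = 2|\sin((\theta-\phi)/2)| \le |\theta-\phi|$ and the choice $C = 1/\|x\|$, which the paper leaves implicit.
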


Before stating the main result of this section, we state the dichotomy lemma for the continuous-time setting.	\begin{lemma}
	\label{lem::rob_cont-dichotomy}
	Either
	\begin{enumerate}
		\item for every $\epsilon > 0$ there exists $T_\epsilon$ such that for all $t > T_\epsilon$, $\aepsilon^c(t)$ intersects $S$, or
		\item there exist computable $T$ and $\epsilon>0$ such that $\aepsilon^c(t)$ does not intersect $S$ for all $t > T$. 
	\end{enumerate}
	Moreover, it can be effectively determined which case holds.
\end{lemma}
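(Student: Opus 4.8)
The plan is to follow, essentially verbatim, the argument used for \autoref{lem::discrete-dichotomy} and \autoref{lem::cont-dichotomy}. The first step is to exhibit a quantifier-free formula
\[
\Phi(\epsilon, t) = \bigvee_{\alpha \in A} \bigwedge_{\beta \in B} p_{\alpha, \beta}(\epsilon, t, e^{r_1 t}, \ldots, e^{r_d t}) \bowtie_{\alpha, \beta} 0
\]
over $\reals_0$ augmented with the functions $t \mapsto e^{r_i t}$ that is equivalent to $\aepsilon^c(t) \cap S \neq \emptyset$. To see that such a formula exists, recall that $\aepsilon^c(t) = \{f(t,z)x + \epsilon\B(t) : z \in \T\}$, that $\T$ is semialgebraic and effectively computable, that the entries of $f(t,z)x$ are polynomials in the entries of $z$, in $t$, and in $e^{r_1 t}, \ldots, e^{r_d t}$, and that $\B(t) = e^{Mt}\B$ is itself definable from $t$ and $e^{r_1 t}, \ldots, e^{r_d t}$ alone (the rotation-invariance of $\B$ makes the trigonometric part of $e^{Mt}$ act trivially on it). Thus $\aepsilon^c(t) \cap S \neq \emptyset$ is obtained from a semialgebraic condition on $z$, a control vector, a target point, and the quantities $t, e^{r_1 t}, \ldots, e^{r_d t}$ by existentially quantifying the first three; eliminating these quantifiers via the Tarski--Seidenberg theorem and substituting the exponential terms back in yields $\Phi$.

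Granting $\Phi$, the dichotomy is immediate from $o$-minimality: if Case~1 fails, fix $\epsilon > 0$ for which $\Phi(\epsilon, t)$ fails for arbitrarily large $t$; the set $\{t \geq 0 : \neg\Phi(\epsilon, t)\}$ is definable in $\realexp$, hence a finite union of intervals, and being unbounded above it contains an interval $(T, \infty)$, which is Case~2.

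For effectiveness one proceeds exactly as in \autoref{lem::cont-dichotomy}. Consider $\Psi(\epsilon) := \exists T_\epsilon. \, \forall t > T_\epsilon : \Phi(\epsilon, t)$. By $o$-minimality each atom $p_{\alpha,\beta}(\epsilon, t, e^{r_1 t}, \ldots, e^{r_d t}) \bowtie_{\alpha,\beta} 0$ holds either on a bounded set of $t$ or on an unbounded interval, so $\Psi(\epsilon)$ is equivalent to $\bigvee_{\alpha}\bigwedge_{\beta} \exists T_\epsilon. \, \forall t > T_\epsilon : p_{\alpha,\beta}(\epsilon, t, e^{r_1 t}, \ldots, e^{r_d t}) \bowtie_{\alpha,\beta} 0$. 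For a fixed atom, first test whether $p_{\alpha,\beta}(\epsilon)$ is identically zero in its remaining variables; otherwise write $p_{\alpha,\beta}(\epsilon, t, e^{r_1 t}, \ldots, e^{r_d t}) = \sum_{i=1}^m q_i(\epsilon, t) R_i^t$ with $R_1 > \cdots > R_m > 0$ of the form $e^{p_1 r_1 + \cdots + p_d r_d}$, $p_j \in \nats$, and each $q_i$ not identically zero. Since $|q_1(\epsilon,t)R_1^t|$ dominates the remaining sum for large $t$, the eventual truth of the atom is governed by the sign of the leading coefficient of $q_1$ as a polynomial in $t$, which is a polynomial condition on $\epsilon$; this produces a formula $\psi_{\alpha,\beta}(\epsilon)$ over $\reals_0$, and hence a formula $\psi(\epsilon)$ over $\reals_0$ equivalent to $\Psi(\epsilon)$. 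The sentence $\forall \epsilon : \psi(\epsilon)$ is then decided by Tarski--Seidenberg, telling us which case holds. In Case~2 one computes $T$ by first locating, by trying successively smaller values, a rational $e > 0$ with $\Psi(e)$ false, and then, for each atom that eventually fails, bounding a threshold beyond which $|q_1(e,t)R_1^t|$ exceeds $\bigl| \sum_{i=2}^m q_i(e,t)R_i^t \bigr|$; the maximum of these thresholds over $(\alpha,\beta) \in A \times B$ is a valid $T$.

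The only place where the continuous-time robust setting needs genuine care beyond transcribing the earlier proofs is the first step, namely checking that $\Phi(\epsilon, t)$ can be written using only $t$ and the exponentials $e^{r_i t}$, with no surviving unbounded trigonometric terms $\cos(\gamma_i t), \sin(\gamma_i t)$. This is essential because the argument takes place in the $o$-minimal structure $\realexp$: were unbounded trigonometric functions to remain, the ambient structure would fail to be $o$-minimal and the dichotomy would break down. The resolution is the one flagged above: the rotation-invariant choice of $\B$ absorbs all rotations arising in $\B(t) = e^{Mt}\B$, while in $\aepsilon^c(t)$ the rotations of $e^{Mt}x$ are carried by the semialgebraic parameter $z \in \T$, so that once $z$ has been quantified away no trigonometric terms are left.
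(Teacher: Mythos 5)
Your proposal is correct and follows essentially the same route as the paper, which states this lemma without proof on the understanding that the argument for \autoref{lem::cont-dichotomy} carries over verbatim: quantifier elimination to get $\Phi(\epsilon,t)$ in terms of $t, e^{r_1t},\ldots,e^{r_dt}$, $o$-minimality of $\realexp$ for the dichotomy, and the dominant-term analysis for effectiveness. Your extra paragraph verifying that no unbounded trigonometric terms survive (the rotations of $e^{Mt}x$ being absorbed into the semialgebraic parameter $z\in\T$ and those in $\B(t)=e^{Mt}\B$ being killed by the rotation-invariance of $\B$) is exactly the point that needs checking in this setting and is handled correctly.
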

We first consider Case~1 and show that if this case holds, the answer to the robust reachability problem is positive. The proof would be exactly the same as for the discrete setting.
\begin{lemma}\label{lem:rob_cont_completeness}
	If for all $\epsilon>0$ there exists $T_\epsilon\in\reals_{\geq 0}$ such that for all $t>T_\epsilon$, $\aepsilon(t)$ intersects $S$, then $S$ is robust reachable.
\end{lemma}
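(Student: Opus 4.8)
The plan is to transpose the proof of Case~1 for the \emph{discrete} robust reachability problem essentially verbatim, substituting the continuous ``matrix builder'' corollary for its discrete counterpart. Fix $\epsilon>0$; it suffices to show that $S$ is $\epsilon$-robust-reachable, i.e.\ that $\PO_\epsilon(m)=e^{Mm}x+\epsilon\,e^{Mm}\B$ meets $S$ for some (integer) $m$, since $\epsilon$ is arbitrary. First I would apply the hypothesis to $\epsilon/2$ to get $T_1$ with $\mathcal{A}_{\epsilon/2}(t)\cap S\neq\emptyset$ for all $t>T_1$. Then, using the corollary preceding \autoref{lem::rob_cont-dichotomy}, I would fix $\epsilon'>0$ small enough (uniformly in $t$, thanks to the uniform constant $C$ there) that $\|\Gamma(t)-z\|<\epsilon'$ implies $f(t,z)x=e^{Mt}x+e^{Mt}\Delta$ for some $\Delta\in\tfrac{\epsilon}{2}\B$.

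Next I would invoke the limiting-shape lemma. The sets $T_n:=\{z\in\T:\mathcal{A}_{\epsilon/2}(n)(z)\cap S\neq\emptyset\}$, $n\in\nats$, are nonempty for $n>T_1$, contained in the compact set $\T$, and of the form required by \autoref{lem::limiting-shape} (definable with parameters $n,e^{r_1n},\dots,e^{r_dn}$). Let $L$ be their limiting shape and fix $p\in L$. Part~(b) of \autoref{lem::limiting-shape} yields $N>T_1$ such that for every integer $n>N$ there is $p_n\in\T$ with $\|p-p_n\|<\epsilon'/2$ and $\mathcal{A}_{\epsilon/2}(n)(p_n)\cap S\neq\emptyset$. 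Since the continuous orbit $\{\Gamma(t):t\geq0\}$ is dense in $\T$, I would pick an integer $m>N$ with $\|\Gamma(m)-p\|<\epsilon'/2$, whence $\|\Gamma(m)-p_m\|<\epsilon'$; by the choice of $\epsilon'$ there is $\Delta\in\tfrac{\epsilon}{2}\B$ with $f(m,p_m)x=e^{Mm}x+e^{Mm}\Delta$, so $f(m,p_m)x\in e^{Mm}x+\tfrac{\epsilon}{2}e^{Mm}\B$. Therefore
\[
\PO_\epsilon(m)=e^{Mm}x+\tfrac{\epsilon}{2}e^{Mm}\B+\tfrac{\epsilon}{2}e^{Mm}\B\supseteq f(m,p_m)x+\tfrac{\epsilon}{2}e^{Mm}\B=\mathcal{A}_{\epsilon/2}(m)(p_m),
\]
and the right-hand side meets $S$, so $\PO_\epsilon(m)\cap S\neq\emptyset$, as desired.

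The only genuinely new point relative to the discrete case — and the step I expect to need the most care — is ensuring that \autoref{lem::limiting-shape} and the Kronecker-style approximation step still go through in the continuous regime: one should work with the closure of $\{\Gamma(t):t\geq0\}$ (a compact subtorus in which the one-parameter flow is dense) rather than the raw orbit, and one must verify that the single $\epsilon'$ chosen above is independent of the time argument. Both are delivered by the uniform constant $C$ in the corollary and by the semialgebraicity and effective computability of $\T$ recorded earlier; everything else is the routine set-arithmetic bookkeeping already carried out in the discrete proof.
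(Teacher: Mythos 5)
Your proposal is correct and is essentially the paper's intended argument: the paper proves this lemma only by remarking that ``the proof would be exactly the same as for the discrete setting,'' and your write-up is precisely that transposition, matching step for step both the discrete robust-reachability proof in the appendix and the fully written continuous pseudo-reachability proof of Section~5 (limiting shape of the integer-indexed sets $T_n$, a uniform $\epsilon'$ from the matrix-builder corollary, and a Kronecker-type approximation at an integer time $m$). The one caveat you rightly flag --- that the approximation of $p\in L$ must be achievable at \emph{integer} times, so one should really work with the closure of the integer-time orbit rather than of the full flow --- is glossed over in the paper's own continuous proofs as well, so it does not distinguish your argument from theirs.
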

Similar to the pseudo-reachability problem, we know that given a finite-time interval, to answer bounded-time robust reachability questions, it is enough to check whether the given target set is reachable within the specified time interval or not.  Now, we are ready to state the main result of this section.
\begin{lemma}\label{lem::robust_cont_reach_reduction}
	The continuous robust reachability problem for linear dynamics reduces to bounded-time reachability problem for linear continuous-time systems.
\end{lemma}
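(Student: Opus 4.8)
The plan is to follow, almost verbatim, the three-stage template used for the continuous-time pseudo-reachability problem. First I would invoke the dichotomy of \autoref{lem::rob_cont-dichotomy}, which is effectively decidable. If Case~1 holds, \autoref{lem:rob_cont_completeness} already guarantees that $S$ is robust reachable, so the algorithm answers \textbf{yes} outright and makes no call to a bounded-time reachability oracle. Hence the substance of the reduction lies entirely in Case~2.

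So suppose Case~2 holds: we are handed effectively computable $T$ and $\epsilon_0>0$ such that $\mathcal{A}_{\epsilon_0}(t)\cap S=\emptyset$ for every $t>T$. The key observation is that robust reachability can then only occur within the window $[0,T]$: for $\epsilon\le\epsilon_0$ we have $\epsilon\B\subseteq\epsilon_0\B$ (since $\B$ is convex and contains $\zerovec$), so $\epsilon\B(t)\subseteq\epsilon_0\B(t)$ and therefore $\PO_\epsilon(t)\subseteq\mathcal{A}_\epsilon(t)\subseteq\mathcal{A}_{\epsilon_0}(t)$, which misses $S$ whenever $t>T$. Consequently $S$ is robust reachable if and only if
\[
\forall\epsilon>0.\;\exists t\in[0,T],\,\delta\in B(\zerovec,\epsilon):\; e^{Mt}(x+\delta)\in S.
\]

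The second stage is to collapse this bounded-time, quantifier-alternating statement into a plain bounded-time reachability statement, namely $\exists t\in[0,T]:e^{Mt}x\in\overline{S}$. For ``$\Leftarrow$'', if $e^{Mt^\ast}x\in\overline{S}$ for some $t^\ast\in[0,T]$ then, since $e^{Mt^\ast}$ is invertible, $e^{Mt^\ast}(x+B(\zerovec,\epsilon))$ is a neighbourhood of $e^{Mt^\ast}x$, hence meets $S$, which supplies a witnessing $\delta$ for every $\epsilon>0$. For ``$\Rightarrow$'', pick witnesses $(t_k,\delta_k)$ with $\|\delta_k\|\le 1/k$, extract, by compactness of $[0,T]$, a convergent subsequence $t_k\to t^\ast$, and pass to the limit in $e^{Mt_k}(x+\delta_k)\in S\subseteq\overline{S}$ to obtain $e^{Mt^\ast}x\in\overline{S}$. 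Since $\overline{S}$ is again semialgebraic and the predicate $\exists t\in[0,T]:e^{Mt}x\in\overline{S}$ is expressible in $\reals_{\exp,\cos\restr[0,T]}$, this is precisely an instance of the bounded-time reachability problem, and $T$ itself is produced by \autoref{lem::rob_cont-dichotomy}; this completes the reduction.

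The step I expect to require the most care is the monotonicity argument opening Case~2: the dichotomy lemma only yields a single ``bad'' level $\epsilon_0$, whereas robust reachability ranges over all $\epsilon>0$, so one must rule out the possibility that shrinking $\epsilon$ creates fresh intersections at times $t>T$ — this is exactly where the convexity of the control set $\B$ and the containment $\PO_\epsilon(t)\subseteq\mathcal{A}_\epsilon(t)$ are used. Everything else is the same compactness-and-closure reasoning already carried out in the pseudo-reachability setting, so the new content of the argument is modest.
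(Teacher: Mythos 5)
Your proposal is correct and follows the same route as the paper: decide the dichotomy of \autoref{lem::rob_cont-dichotomy}, answer \textbf{yes} via \autoref{lem:rob_cont_completeness} in Case~1, and in Case~2 reduce to reachability of $\overline{S}$ on the window $[0,T]$. In fact you supply two details the paper's own (very terse) proof leaves implicit — the monotonicity argument showing a single bad level $\epsilon_0$ kills all smaller $\epsilon$ beyond time $T$, and the compactness/closure equivalence with $\exists t\in[0,T]: e^{Mt}x\in\overline{S}$ — both of which are exactly the intended justifications.
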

\begin{proof}
	Using the results of Lemma~\ref{lem::rob_cont-dichotomy}, we can effectively decide whether Case~1 holds or not. If Case~1 holds, by Lemma~\ref{lem:rob_cont_completeness}, we know that $S$ is robustly reachable. Otherwise, Case~2 holds and we effectively compute the time-bound $T$ and therefore, need to check if $S$ is reachable by the orbit of the system within the time interval $[0,T]$. Therefore, the robust reachability problem reduces into the bounded-time reachability problem for continuous-time linear dynamical systems.
	\end{proof}
Finally, since the bounded-time reachability problem for continuous-time linear dynamical systems can be encoded in $\reals_{\exp, \cos \restr [0, T]}$, we have the following (conditional) decidability result.
\begin{corollary}
	The continuous robust reachability problem for the continuous-time linear dynamical systems is decidable subject to Schanuel's conjecture.
\end{corollary}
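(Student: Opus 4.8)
The plan is simply to compose the reduction already established in Lemma~\ref{lem::robust_cont_reach_reduction} with the conditional decidability of the theory of $\reals_{\exp,\cos\restr[0,T]}$ recalled in Section~2. By Lemma~\ref{lem::robust_cont_reach_reduction}, an algorithm for the continuous robust reachability problem runs the effective dichotomy of Lemma~\ref{lem::rob_cont-dichotomy}; in Case~1 it answers ``reachable'' by Lemma~\ref{lem:rob_cont_completeness}, and in Case~2 it computes the time bound $T$ and must decide whether the exact orbit $\langle e^{Mt}x : t\in[0,T]\rangle$ reaches $S$. So the entire corollary reduces to the single claim that this bounded-time reachability query is decidable assuming Schanuel's conjecture.

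To establish that claim I would first write out $e^{Mt}$ explicitly from the real Jordan form of $M$: every entry is a finite $\reals\cap\algebraics$-linear combination of functions of the form $t^j e^{r_i t}\cos(\omega_i t)$ and $t^j e^{r_i t}\sin(\omega_i t)$, with the $r_i,\omega_i$ real algebraic. Hence $t\mapsto e^{Mt}x$ is a vector of exponential-polynomial-trigonometric functions, and after the change of variable $t = T\tau$ with $\tau\in[0,1]$ all the trigonometric arguments lie in a fixed bounded interval, so each coordinate becomes a term of $\reals_{\exp,\cos\restr[0,T']}$ for a computable $T'$. Since $S$ is semialgebraic it is definable in $\reals_0 \subseteq \reals_{\exp,\cos\restr[0,T']}$, and therefore the reachability predicate ``$\exists \tau\in[0,1]: e^{MT\tau}x\in S$'' is a closed formula of $\reals_{\exp,\cos\restr[0,T']}$.

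Finally I would invoke the Macintyre--Wilkie result (as recorded in Section~2) that the theory of $\reals_{\exp,\cos\restr[0,T']}$ is decidable subject to Schanuel's conjecture, so this sentence can be evaluated; combined with the effective dichotomy of Lemma~\ref{lem::rob_cont-dichotomy} this yields a decision procedure for the whole problem. I expect the only point requiring care --- and it is bookkeeping rather than a genuine difficulty --- to be the rescaling step: one must ensure that, once $T$ has been computed in Case~2, every trigonometric argument occurring in $e^{Mt}$ is pushed into the \emph{bounded} interval on which $\sin$ and $\cos$ are available, so that the reachability predicate genuinely lands in the $o$-minimal, conditionally decidable structure $\reals_{\exp,\cos\restr[0,T']}$ and not in the (as far as is known) undecidable full structure with unbounded trigonometric functions. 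Everything else is a direct assembly of results already proved in the paper.
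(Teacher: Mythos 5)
Your proposal is correct and follows essentially the same route as the paper: compose the reduction of Lemma~\ref{lem::robust_cont_reach_reduction} with the encodability of bounded-time reachability in $\reals_{\exp,\cos\restr[0,T]}$ and the Macintyre--Wilkie conditional decidability of that theory; your explicit rescaling to push the trigonometric arguments into a bounded interval is exactly the bookkeeping the paper leaves implicit. The only detail worth adding is that in Case~2 the bounded-time query should be posed against the closure $\overline{S}$ rather than $S$ itself (as in the discrete Case~2 argument), which is harmless since $\overline{S}$ is again semialgebraic and effectively computable.
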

\end{document}